\documentclass[a4paper,11pt,english]{amsart}
\usepackage[utf8]{inputenc}             
\usepackage[left = 25mm,top = 30mm,bottom = 25mm,right = 25mm]{geometry}
\usepackage{amsmath,amssymb,amsfonts,dsfont,amsthm}
\usepackage{lmodern}
\usepackage{cmap}                       
\usepackage{babel}
\usepackage[pdfdisplaydoctitle = true,
      colorlinks = true,
      urlcolor = blue,
      citecolor = blue,
      linkcolor = blue,
      pdfstartview = FitH,
      pdfpagemode = UseNone,
      bookmarksnumbered = true]{hyperref} 
\usepackage{hyperxmp}
\usepackage{float}
\usepackage{booktabs}

\newtheorem{thm}{Theorem}[section]

\newtheorem{lem}[thm]{Lemma}

\theoremstyle{definition}
\newtheorem{defn}[thm]{Definition}
\theoremstyle{remark}
\newtheorem{rem}[thm]{Remark}
\newtheorem{ex}[thm]{Example}
\numberwithin{equation}{section}

\renewcommand{\vec}{\pmb}

\newcommand{\RR}{\mathbb{R}}                            
\newcommand{\NN}{\mathbb{N}}                            

\newcommand{\VV}{\mathbb{V}}                            
\newcommand{\Ela}{\mathbb{E}\mathrm{la}}                
\newcommand{\Piez}{\mathbb{P}\mathrm{iez}}              
\newcommand{\Magn}{\mathbb{M}\mathrm{agn}}              
\newcommand{\HH}{\mathbb{H}}                            
\newcommand{\TT}{\mathbb{T}}                            
\newcommand{\Sym}{\mathbb{S}}                           
\newcommand{\Pn}[1]{\mathcal{P}_{#1}}                   
\newcommand{\Hn}[1]{\mathcal{H}^{#1}}                   

\newcommand{\OO}{\mathrm{O}}                            
\newcommand{\SO}{\mathrm{SO}}                           
\newcommand{\Id}{\mathrm{I}}                            

\newcommand{\ee}{\pmb{e}}                               
\newcommand{\nn}{\pmb{n}}                               
\newcommand{\ww}{\pmb{w}}                               
\newcommand{\xx}{\pmb{x}}                               

\newcommand{\bnu}{\pmb{\nu}}                            
\newcommand{\btau}{\pmb{\tau}}                          

\newcommand{\bmu}{\pmb{\mu}}

\newcommand{\bepsilon}{\pmb{\epsilon}}
\newcommand{\bsigma}{\pmb{\sigma}}
\newcommand{\bomega}{\pmb{\omega}}

\newcommand{\bPi}{\pmb{\Pi}}

\newcommand{\hrho}{\hat{\rho}}

\newcommand{\rp}{\mathrm{p}}                            
\newcommand{\rh}{\mathrm{h}}                            

\newcommand{\ba}{\mathbf{a}}
\newcommand{\bb}{\mathbf{b}}

\newcommand{\bd}{\mathbf{d}}

\newcommand{\br}{\mathbf{r}}
\newcommand{\bs}{\mathbf{s}}

\newcommand{\bv}{\mathbf{v}}

\newcommand{\by}{\mathbf{y}}
\newcommand{\bz}{\mathbf{z}}

\newcommand{\lc}{\pmb\varepsilon}                       
\newcommand{\bC}{\mathbf{C}}                            
\newcommand{\bH}{\mathbf{H}}                            
\newcommand{\bK}{\mathbf{K}}                            %
\newcommand{\bL}{\mathbf{L}}                            %
\newcommand{\bS}{\mathbf{S}}                            
\newcommand{\bT}{\mathbf{T}}                            
\newcommand{\bF}{\mathbf{F}}                            

\newcommand{\bP}{\mathbf{P}}

\DeclareMathOperator{\tr}{tr}
\DeclareMathOperator{\2dots}{:}

\newcommand{\norm}[1]{\left\Vert#1\right\Vert}  
\newcommand{\set}[1]{\left\{#1\right\}}         

\newcommand{\otimesbar}{\; \underline{\overline{\otimes}} \;}
\newcommand{\rcont}{\overset{(r)}{\cdot}} 
\newcommand{\poisson}[2]{\left\{ #1 , #2 \right\}} 
\newcommand{\contsnu}[2]{\bnu^{#1} \odot\left(\bS\overset{(#2)}{\cdot}\bnu^{#2}\right)}

\hypersetup{
  pdfauthor = {M. Olive, B. Desmorat, B. Kolev, R. Desmorat}, 
  pdftitle = {Reduced algebraic conditions for plane/axial tensorial symmetries}, 
  pdfsubject = {}, 
  pdfkeywords = {Anisotropy; Elasticity; Piezo-electricity; Piezo-magnetism} 
  pdflang = en, 
  }
\begin{document}

\title{Reduced algebraic conditions for plane/axial tensorial symmetries}

\author{M. Olive}
\address[Marc Olive]{Universit\'{e} Paris-Saclay, ENS Paris-Saclay, CNRS, LMT - Laboratoire de M\'{e}canique et Technologie, 94235, Cachan, France}
\email{marc.olive@math.cnrs.fr}

\author{B. Desmorat}
\address[Boris Desmorat]{Sorbonne Universit\'{e}, UMPC Univ Paris 06, CNRS, UMR 7190, Institut d'Alembert, F-75252 Paris Cedex 05, France \& Univ Paris Sud 11, F-91405 Orsay, France}
\email{boris.desmorat@sorbonne-universite.fr}

\author{B. Kolev}
\address[Boris Kolev]{Universit\'{e} Paris-Saclay, ENS Paris-Saclay, CNRS, LMT - Laboratoire de M\'{e}canique et Technologie, 94235, Cachan, France}
\email{boris.kolev@math.cnrs.fr}

\author{R. Desmorat}
\address[Rodrigue Desmorat]{Universit\'{e} Paris-Saclay, ENS Paris-Saclay, CNRS, LMT - Laboratoire de M\'{e}canique et Technologie, 94235, Cachan, France}
\email{rodrigue.desmorat@ens-paris-saclay.fr}

\begin{abstract}
  In this article, we formulate necessary and sufficient \emph{polynomial equations} for the existence of a symmetry plane or an order-two axial symmetry for a totally symmetric tensor of order $n \ge 1$. These conditions are effective and of degree $n$ (the tensor's order) in the components of the normal to the plane (or the direction of the axial symmetry). These results are then extended to obtain necessary and sufficient polynomial conditions for the existence of such symmetries for an Elasticity tensor, a Piezo-electricity tensor or a Piezo-magnetism pseudo-tensor.
\end{abstract}

\keywords{Anisotropy; Elasticity; Piezo-electricity; Piezo-magnetism}
\subjclass[2010]{74E10 (74B05; 74F15)} 

\thanks{R. Desmorat, B. Kolev and M.Olive were partially supported by CNRS Projet 80|Prime GAMM (G\'{e}om\'{e}trie alg\'{e}brique complexe/r\'{e}elle et m\'{e}canique des mat\'{e}riaux)}%

\date{April 20, 2020}

\maketitle

\section{Introduction}
\label{sec:introduction}

In classical 3D Linear Elasticity, plane symmetries are of primary importance either to characterize symmetry classes~\cite{CVC2001} or to determine the propagation axes of longitudinal waves~\cite{Nor1989,GG1999}. Underlying these mechanical properties is the \emph{linear representation} $\rho$ \cite{GSS1988,FH2013} of the orthogonal group $\OO(3)$ on the space of Elasticity tensors $\Ela$~\cite{FV1996}, and we write
\begin{equation*}
  \overline{\bC} = \rho(g)\bC, \qquad g \in \OO(3), \quad \bC \in \Ela .
\end{equation*}
A symmetry plane of an Elasticity tensor $\bC$ corresponds to the symmetry $g = \bs(\bnu)$ with respect to the plane $\bnu^\perp$, where
\begin{equation*}
  \bs(\bnu) : = \Id-2 \bnu \otimes \bnu, \qquad \norm{\bnu} = 1, \qquad \det \bs(\bnu) = -1 .
\end{equation*}
If $\bnu = (x,y,z)\in \RR^{3}$ in some basis of $\RR^{3}$, the equation $\rho(\bs(\bnu))\bC = \bC$, recasts into $21$ homogeneous polynomial equations of degree $8$ in the components $(x,y,z)$ of $\bnu$. The number of solutions of these equations, as well as their respective angles, provide direct information on the \emph{symmetry group} of the given tensor, and therefore of its symmetry class. Besides, each solution gives rise to a direction of propagation of a longitudinal wave.

Therefore, two natural questions arise from these considerations, which are still meaningful for higher order tensors. The first one concerns the explicit determination of the symmetry group of a given tensor (and thus of its symmetry class). The second one concerns the determination of the directions of longitudinal wave propagation.

To determine the symmetry group of an elasticity tensor, some authors have used Kelvin's representation, its spectral decomposition and its \emph{eigenstrains}. Unfortunately, this approach introduces many cases related to multiplicity of eigenvalues and does not give necessary and sufficient conditions for the characterisation of its symmetry class. Another approach has been considered by Fran\c{c}ois and al~\cite{FGB1998}. They used a distance function to characterize symmetry planes and produced pole figures to characterize each class. However, extending such an approach to $n$th-order tensors requires solving equations of degree $2n$ in $\bnu$.

Many studies have focused on the calculation of the directions of wave propagation, either in Elasticity~\cite{Kha1962,Fed1968,BH1998} or in the case of coupled phenomena, such as Piezo-electricity or Piezo-magnetism~\cite{Aul1981,BJC2008,PLWF2008}. Indeed, any symmetry plane $\bnu^{\perp}$ of an Elasticity tensor $\bC$ gives rise to a propagation direction $\bnu$ of a longitudinal wave~\cite{Sti1965,Kol1966,Fed1968,Nor1989}, also called an \emph{acoustic axis}.

Among these studies stand out the famous Cowin--Mehrabadi conditions~\cite{CM1987,Cow1989} for the existence of a symmetry plane of an Elasticity tensor, \textit{i.e.} a fourth-order tensor $\bC$, with index symmetries
\begin{equation*}
  C_{ijkl} = C_{jikl} = C_{klij}.
\end{equation*}
These conditions were first expressed by introducing the two independent traces of $\bC$, the \emph{dilatation} tensor $\bd$ and \emph{Voigt's} tensor $\bv$:
\begin{equation*}
  \bd : = \tr_{12} \bC \quad (d_{ij} = C_{ppij}), \qquad \bv : = \tr_{13}\bC \quad (v_{ij} = C_{pipj}),
\end{equation*}
which are symmetric second-order tensors.

\begin{thm}[Cowin--Mehrabadi (1987)]\label{thm:CM1987}
  Let $\bnu$ be a unit vector. Then $\bnu^{\perp}$ is a symmetry plane of an Elasticity tensor $\bC\in \Ela$ if and only if
  \begin{align}
    \left[ \left(\bnu\cdot \bC\cdot \bnu\right) \bnu\right] \times \bnu  & = 0 \label{eq:CM1987-1}
    \\
    \left[ \left(\btau\cdot \bC\cdot \btau\right)\bnu\right] \times \bnu & = 0 \label{eq:CM1987-2}
    \\
    \left( \bd\, \bnu\right) \times \bnu                                 & = 0 \label{eq:CM1987-3}
    \\
    \left( \bv \, \bnu\right)\times \bnu                                 & = 0 \label{eq:CM1987-4}
  \end{align}
  for all unit vectors $\btau$ perpendicular to $\bnu$.
\end{thm}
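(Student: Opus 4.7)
The plan is to work in an orthonormal basis $(\bnu, \btau_1, \btau_2)$ with $\btau_1, \btau_2$ spanning $\bnu^\perp$. In this basis $\bs(\bnu)$ sends $\bnu \mapsto -\bnu$ and fixes $\btau_1, \btau_2$, so the equation $\rho(\bs(\bnu))\bC = \bC$ is equivalent to the vanishing of every component $C_{ijkl}$ in which the index corresponding to the $\bnu$-direction appears an odd number of times. Taking the symmetries $C_{ijkl} = C_{jikl} = C_{klij}$ into account, this reduces to exactly eight independent scalar conditions: the vanishing of $C_{1112}$ and $C_{1113}$ (three $1$-indices), together with the six independent components having a single $1$-index.

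For the necessity direction, I would use an equivariance argument. Each of the four second-order objects appearing on the left of \eqref{eq:CM1987-1}--\eqref{eq:CM1987-4} is a symmetric $2$-tensor obtained from $\bC$ by an $\OO(3)$-equivariant contraction, using as inputs only $\bnu$ itself and, in \eqref{eq:CM1987-2}, a vector $\btau \in \bnu^\perp$. Both $\bnu$ (up to a sign) and any such $\btau$ are fixed by $\bs(\bnu)$. Hence, whenever $\bs(\bnu)$ stabilises $\bC$, the tensors $\bnu \cdot \bC \cdot \bnu$, $\btau \cdot \bC \cdot \btau$, $\bd$ and $\bv$ are themselves fixed by $\bs(\bnu)$. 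A symmetric $2$-tensor invariant under the reflection $\bs(\bnu)$ must admit $\bnu$ as an eigenvector, which is precisely the cross-product condition $(\bA \bnu) \times \bnu = 0$; applying this to the four tensors yields \eqref{eq:CM1987-1}--\eqref{eq:CM1987-4}.

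For sufficiency I would decode each equation in the adapted basis. Equation \eqref{eq:CM1987-1} forces $C_{1112} = C_{1113} = 0$. Parameterising $\btau(\theta) = \cos\theta\, \btau_1 + \sin\theta\, \btau_2$, equation \eqref{eq:CM1987-2} becomes, component-wise, a trigonometric polynomial in $\theta$ whose coefficients of $\cos^2\theta$, $\cos\theta \sin\theta$ and $\sin^2\theta$ must each vanish; this produces six additional scalar conditions which kill exactly the six components with a single $1$-index. Combined with the two equations coming from \eqref{eq:CM1987-1}, one recovers precisely the eight vanishings characterising $\rho(\bs(\bnu))\bC = \bC$. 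Under this formulation, \eqref{eq:CM1987-3} and \eqref{eq:CM1987-4} follow as consequences of \eqref{eq:CM1987-1}--\eqref{eq:CM1987-2} via the traces $\bd$ and $\bv$; they are stated because they are necessary, but contribute no independent information to the reverse implication.

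The main difficulty is essentially bookkeeping: one must carefully track the index symmetries of $\bC$ when decoding each vector equation into scalar ones, and verify that the eight equations so obtained exhaust, without duplication, all independent components in which the $\bnu$-direction appears an odd number of times. Once the adapted basis is fixed and the trigonometric polynomial in $\theta$ is expanded, no deeper obstacle arises.
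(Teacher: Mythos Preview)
The paper does not actually prove theorem~\ref{thm:CM1987}: it is quoted from the original reference~\cite{CM1987} as background and motivation, and no proof is given anywhere in the text (the paper only remarks, with further citations, that conditions~\eqref{eq:CM1987-3}--\eqref{eq:CM1987-4} turn out to be redundant). There is therefore nothing in the paper to compare your argument against.

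That said, your sketch is a correct and standard proof of the Cowin--Mehrabadi theorem. The necessity direction via equivariance is clean: each of the four second-order tensors is built from $\bC$ and vectors fixed (up to sign, which is harmless by parity) by $\bs(\bnu)$, hence inherits the reflection symmetry, whence $\bnu$ is an eigenvector. For sufficiency, your component count is right: in the adapted basis the invariance $\rho_{4}(\bs(\bnu))\bC=\bC$ amounts to the vanishing of the eight independent components carrying an odd number of $\bnu$-indices; \eqref{eq:CM1987-1} kills the two ``three-$\bnu$'' components, and expanding \eqref{eq:CM1987-2} in $\theta$ gives six linear equations whose solution forces all six ``one-$\bnu$'' components to vanish. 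One small refinement worth making explicit in a full write-up: the six equations from~\eqref{eq:CM1987-2} do not kill the six components \emph{bijectively}---two of them are of the form $A+B=0$ while $A=0$ and $B=0$ appear separately among the others---so you should check (as you implicitly do) that the resulting $6\times 6$ linear system has full rank. Your observation that \eqref{eq:CM1987-3}--\eqref{eq:CM1987-4} are then automatic is exactly the remark the paper makes after stating the theorem.
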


It appeared later that the third and fourth conditions in theorem~\ref{thm:CM1987} are in fact consequences of the first two ones~\cite{Cow1989,Nor1989}. Nevertheless, when non trivially satisfied, conditions \eqref{eq:CM1987-3} and \eqref{eq:CM1987-4} give candidates --- the common eigenvectors of $\bd$ and $\bv$ --- to be normals defining the symmetry planes of $\bC$. On the other side, the fact that~\eqref{eq:CM1987-2} needs to be checked for an infinity of $\btau \perp \bnu$, makes this criteria not really constructive. For instance, for a cubic tensor $\bC\in \Ela$, both $\bd$ and $\bv$ are spherical, so that \eqref{eq:CM1987-3} and \eqref{eq:CM1987-4} are trivially satisfied and theorem~\ref{thm:CM1987} is not very useful, in that case. Such a drawback is also present in the equivalent forms used by~\cite{Nor1989,Jar1994} and in the generalized Cowin-Mehrabadi theorems~\cite{Tin2003}.

Note also the following fact. The direct necessary and sufficient condition
\begin{equation*}
  \rho(\bs(\bnu))\bC = \bC
\end{equation*}
for the existence of a symmetry plane $\bnu^\perp$ are polynomial equations of degree 8 in the components of $\bnu$, whereas Cowin--Mehrabadi's conditions are of degree at most 4 in $\bnu$ and for these reasons could be called \emph{reduced equations}.

In the present work, we propose to overcome these difficulties by formulating \emph{new reduced algebraic equations} which are necessary and sufficient conditions for a unit vector $\bnu$ to define a symmetry plane of a tensor of any order. This new approach is also constructive, and does not involve vectors $\btau$ orthogonal to $\bnu$.

To illustrate our approach, consider a symmetric second-order tensor $\ba$ (order $n = 2$). In this case, the direct approach --- considering $\ba$ unchanged by the action of $\bs(\bnu)$, \emph{i.e.} solving $\rho(\bs(\bnu))\ba = \ba$ (see~equation~\eqref{eq:rhosaegala_pol} in section \ref{sec:reduced-equations}) --- leads to 6 algebraic equations of degree 4 ($ = 2n$) in $\bnu$. But one can notice that $\bnu^\perp$ is a symmetry plane if and only if $\bnu$ is an eigenvector of $\ba$, which writes as
\begin{equation*}
  (\ba\,\bnu)\times \bnu = 0,
\end{equation*}
and leads thus to 3 algebraic equations of degree 2 (the order of the tensor). In short, the present work is a generalization of this simple observation.

We shall first solve the problem for a totally symmetric tensor or pseudo-tensor of any order $n$, and then for any tensor using the harmonic decomposition. Our main result, theorem~\ref{thm:main} states necessary and sufficient conditions for the existence of a symmetry plane or an order-two axial symmetry for a given totally symmetric tensor or pseudo-tensor of order $n$. Moreover, these conditions are algebraic equations of degree $n$ in $\bnu$, instead of degree $2n$ in the direct approach. Our work is based on a strong link between \emph{harmonic tensors} on $\RR^{3}$ and three-variable \emph{harmonic polynomials}~\cite{Bae1993}, \textit{i.e.} homogeneous three-variable polynomials with vanishing Laplacian. It involves some \emph{covariant} operations, such as the \emph{generalized cross product} as defined in~\cite{OKDD2018a}, and allows us to obtain particularly condensed equations, as illustrated by the condition $(\ba\,\bnu)\times \bnu = 0$ for second-order symmetric tensors.

To illustrate the strength and the generality of this approach, we finally apply it to several important mechanical situations: to \emph{Elasticity} tensors (\autoref{sec:elasticity-tensors}), to \emph{Piezo-electricity} tensors (\autoref{sec:Piezo-electricity-tensors}) and to \emph{Piezo-magnetism} pseudo-tensors (\autoref{sec:Piezo-magnetism-tensors}). Some applications are also provided to \emph{fabric tensors} (\autoref{sec:fabric-tensors}), which are used in continuum mechanics for the descriptions of crack density~\cite{Ona1984}, of anisotropic contacts and grains orientations within granular materials~\cite{Oda1982}, of the anisotropy of biological tissues and bones~\cite{JT1984} and of the tensorial representation of the rafting phenomenon in single crystal superalloys at high temperature~\cite{CDC2018}.

The outline of the paper is the following. In~\autoref{sec:notations-and-backgrounds}, we fix notations and recall mathematical/geometrical backgrounds. Our main result, theorem~\ref{thm:main} is stated and proved in~\autoref{sec:reduced-equations}. Finally, applications are provided to \emph{fabric tensors} in~\autoref{sec:fabric-tensors}, \emph{Elasticity} in~\autoref{sec:elasticity-tensors}, \emph{Piezo-electricity} in~\autoref{sec:Piezo-electricity-tensors} and \emph{Piezo-magnetism} in~\autoref{sec:Piezo-magnetism-tensors}.

\section{Notations and geometrical backgrounds}
\label{sec:notations-and-backgrounds}

We define $\TT^{n}(\RR^{3}) : = \otimes^{n}(\RR^{3})$ as the vector space of $n$th-order tensors of the Euclidean space $\RR^{3}$, $\Sym^{n}(\RR^{3})$ as the subspace of $\TT^{n}(\RR^{3})$ of totally symmetric tensors of order $n$ and $\Lambda^{n}(\RR^{3})$ as the subspace of $\TT^{n}(\RR^{3})$ of alternate $n$th-order tensors. Note that $\Lambda^{n}(\RR^{3}) = \set{0}$, if $n > 3$, and that $\Lambda^{3}(\RR^{3})$ is one-dimensional.  The third-order \emph{Levi-Civita tensor} $\lc$ is a basis of $\Lambda^{3}(\RR^{3})$ and, in an orthonormal basis $(\ee_i)$, its components are
\begin{equation*}
  \lc=(\varepsilon_{ijk}), \qquad \varepsilon_{ijk}=\det(\ee_i, \ee_j, \ee_k).
\end{equation*}

\subsection{Natural and twisted tensorial representations}
\label{subsec:tensorial-representations}

The canonical action of the orthogonal group $\OO(3)$ on $\RR^{3}$ induces a linear action (also called a linear representation in mathematics) $\rho_n$ on the vector space $\TT^{n}(\RR^{3})$ of $n$th-order tensors. This representation is defined as follows
\begin{equation*}
  (\rho_{n}(g)\bT)(\xx_{1},\dotsc , \xx_{n}) : = \bT(g^{-1}\xx_{1},\dotsc , g^{-1}\xx_{n}), \qquad g \in \OO(3),\, \bT \in \TT^{n}(\RR^{3}),
\end{equation*}
or in components as
\begin{equation*}
  (\rho_{n}(g)\bT)_{i_{1} i_{2}\dotsb i_{n}} = g_{i_{1}}^{j_{1}}g_{i_{2}}^{j_{2}} \dotsb g_{i_{n}}^{j_{n}} \, T_{j_{1} j_{2}\cdots j_{n}}.
\end{equation*}
It is called the \emph{natural tensorial representation} of $\OO(3)$ on $\TT^{n}(\RR^{3})$.

If $\VV$ is a subspace of $\TT^{n}(\RR^{3})$ which is \emph{stable} under $\OO(3)$, which means that
\begin{equation*}
  \rho_{n}(g)(\VV) \subset \VV, \qquad \forall g \in \OO(3),
\end{equation*}
the restriction of $\rho_{n}$ to $\VV$ induces a representation of $\OO(3)$ on $\VV$ that we will still denote by $\rho_{n}$. This applies in particular to $\VV = \Sym^{n}(\RR^{3})$ and $\VV = \Lambda^{n}(\RR^{3})$.

\begin{ex}
  For instance, $\rho_{3}$ induces a representation of $\OO(3)$ on $\Lambda^{3}(\RR^{3})$, the one-dimensional subspace of $\TT^{3}(\RR^{3})$ of alternate third-order tensors. This induced representation is given by
  \begin{equation*}
    \rho_{3}(g) \bmu = (\det g)\, \bmu, \qquad \bmu \in \Lambda^{3}(\RR^{3}).
  \end{equation*}
\end{ex}

Given a linear representation $\rho$ of $\OO(3)$ on $\VV$, one can build a new one on $\Lambda^{3}(\RR^{3}) \bigotimes \VV$, defined as
\begin{equation*}
  (\rho_{3} \otimes \rho)(g) \, \bmu \otimes \bT : = (\rho_{3}(g)\bmu) \otimes (\rho(g)\bT) = (\det g)\, \bmu \otimes (\rho(g)\bT), \qquad \bmu \in \Lambda^{3}(\RR^{3}), \,\bT \in \VV.
\end{equation*}
Moreover, $\Lambda^{3}(\RR^{3}) \bigotimes \VV$ and $\VV$ have the same dimension and are isomorphic (after the choice of an oriented euclidean structure on $\RR^{3}$). Hence, we may think of this new representation, that we will denote by $\hrho$, as a representation of $\OO(3)$ on $\VV$ itself:
\begin{equation}
  \hat{\rho}(g)\bT : = (\det g)\, \rho(g) \bT, \qquad \bT \in \VV.
\end{equation}
The representation $\hat{\rho}$ will be called the \emph{twisted tensorial representation}.

Note that when restricted to $\SO(3)$, the two representations $\rho$ and $\hrho$ are the same:
\begin{equation*}
  \hat{\rho}(g) \bT = \rho(g)\bT \qquad \forall \, g\in \SO(3),\, \forall \bT \in \VV.
\end{equation*}

\begin{rem}\label{rem:moinsId}
  The distinction between the tensorial representations $\rho_{n}$ and $\hrho_{n}$ is illustrated by the action of the central symmetry $-I \in \OO(3)$, where $I$ the identity, as we have
  \begin{equation*}
    \rho(-I) \bT = (-1)^{n} \bT,\qquad \hrho(-I) \bT = (-1)^{n+1} \bT,\quad \bT\in \TT^{n}(\RR^{3}).
  \end{equation*}
  In particular:
  \begin{itemize}
    \item if $\bT \in \TT^{2p}$, $\rho_{2p}(-I) \bT = \bT$ and the natural $\OO(3)$ representation reduces to the $\SO(3)$ representation on even-order tensors;
    \item if $\bT \in \TT^{2p+1}$, $\hrho_{2p+1}(-I) \bT = \bT$ and the $\OO(3)$ twisted representation reduces to the $\SO(3)$ representation on odd-order pseudo-tensors.
  \end{itemize}
\end{rem}

\begin{ex}
  There is a well-known $\OO(3)$-equivariant isomorphism between the natural representation $\rho_{2}$ on the space of skew-symmetric second-order tensors $\Lambda^{2}(\RR^{3})$ and the twisted representation $\hrho_{1}$ on $\RR^{3}$. It is given by
  \begin{equation*}
    \bomega \mapsto \lc \2dots \bomega, \qquad \Lambda^{2}(\RR^{3}) \to \RR^{3}.
  \end{equation*}
  where $\lc = (\epsilon_{ijk})$ is the {Levi-Civita tensor} (\textit{i.e} the determinant in the canonical basis) and the contraction $\lc:\bomega$ is defined by
  \begin{equation*}
    (\lc: \bomega)_{i} = \varepsilon_{ijk}\omega_{jk}.
  \end{equation*}
\end{ex}

\begin{rem}
  It is necessary to construct mathematical objects that allow to translate the fact that some physical phenomenon indeed depend on the orientation of space. It is therefore natural to consider tensors that are changed in a non--standard way when the orientation of space changes. These tensors are usually called \emph{pseudo--tensors} or \emph{axial--tensors}~\cite{BT1968}. They correspond to the twisted representations $\hrho_n$ on $\TT^{n}(\RR^{3})$. Examples of such objects are the magnetic field $\vec H$, the magnetization $\vec M$ which correspond to the representation $\hrho_{1}$ on $\RR^{3}$ and the Piezo-magnetism pseudo-tensor $\bPi$ which corresponds to the representation $\hrho_{3}$ on $\TT^{3}(\RR^{3})$ (see Section~\ref{sec:Piezo-magnetism-tensors}).
\end{rem}

In the following, when we do not want to specify which tensorial representation is considered we will use the generic notation $\varrho$ to design either $\rho$ or $\hrho$.

\subsection{Totally symmetric tensors and homogeneous polynomials}
\label{subsec:tot-sym-and-hom-pol}

There is a well-known isomorphism between the space $\Sym^{n}(\RR^{3})$ of totally symmetric tensors and the space $\Pn{n}(\RR^{3})$ of $n$th degree \emph{homogeneous polynomials}
\begin{equation*}
  \varphi : \Sym^{n}(\RR^{3}) \to \Pn{n}(\RR^{3}), \qquad \bS \mapsto \rp(\xx) : = \bS(\xx, \dotsc , \xx).
\end{equation*}
In any basis $(\ee_{i})$, we have
\begin{equation*}
  \varphi(\bS)(\xx) = S_{i_{1}i_{2}\dotsc i_{n}}x_{i_{1}}x_{i_{2}} \dotsb x_{i_{n}}, \qquad \xx : = (x_{1},x_{2},x_3)\in \RR^{3}.
\end{equation*}
The inverse operation $\bS = \varphi^{-1}(\rp)\,\in \Sym^{n}(\RR^{3})$ is obtained explicitly by polarization (see~\cite{Gor2017,OKDD2018a}, or~\cite{Bae1993}).

\begin{ex}
  If $\ba = (a_{ij})$ is a symmetric second-order tensor, we get
  \begin{equation*}
    \varphi(\ba)(\xx) = a_{11}x_{1}^{2}+a_{22}x_{2}^{2}+a_{33}x_3^{2}+2a_{12}x_{1}x_{2}+2a_{13}x_{1}x_3+2a_{23}x_{2}x_3
  \end{equation*}
\end{ex}

\begin{rem}
  If one introduces the following actions of $\OO(3)$ on $\Pn{n}(\RR^{3})$:
  \begin{equation*}
    (\rho(g) \rp)(\xx) : = \rp(g^{-1}\xx), \qquad (\hrho(g)  \rp)(\xx) : = \det(g)\rp(g^{-1}\xx),
  \end{equation*}
  where $\rp \in \Pn{n}(\RR^{3})$, then, $\varphi$ becomes an equivariant isomorphism with respect to both actions. In other words:
  \begin{equation*}
    \varphi(\rho(g) \bS) = \rho(g) \varphi(\bS), \qquad \varphi (\hrho(g) \bS) = \hrho(g)  \varphi(\bS),
  \end{equation*}
  for all $g\in \OO(3)$ and $\bS\in \Sym^{n}(\RR^{3})$.
\end{rem}

\subsection{Harmonic decomposition}
\label{subsec:harmonic-decomposition}

\begin{defn}
  An $n$-th order totally symmetric and \emph{traceless} tensor will be called an \emph{harmonic tensor} and the subspace of $\Sym^{n}(\RR^{3})$ of harmonic tensors will be denoted by $\HH^{n}(\RR^{3})$ (or simply $\HH^{n}$, if there is no ambiguity).
\end{defn}

\begin{rem}
  In the correspondence between totally symmetric tensors and homogeneous polynomials, a \emph{traceless} totally symmetric tensor $\bH$ corresponds to an \emph{harmonic polynomial} $\rh$ (\textit{i.e.} with vanishing Laplacian: $\triangle \rh = 0$) and this justifies the appellation of \emph{harmonic tensor}. The space of homogeneous harmonic polynomials of degree $n$ will be denoted by $\Hn{n}(\RR^{3})$.
\end{rem}

The subspace $\HH^{n}(\RR^{3})$ of $\TT^{n}(\RR^{3})$ is stable under both linear representations $\rho_{n}$ and $\hrho_{n}$ and moreover \emph{irreducible} (its only invariant subspaces are itself and the null space). In the case of a symmetric second-order tensor, this decomposition corresponds to the usual decomposition into a deviator and a spherical tensor and we have an equivariant isomorphism between $\Sym^{2}(\RR^{3})$ and $\HH^{2}(\RR^{3}) \oplus \HH^{0}(\RR^{3})$, where $\HH^{2}(\RR^{3})$ corresponds to the deviatoric part and $\HH^{0}(\RR^{3})$, to the spherical part (of scalar component the trace of considered tensor). More precisely we have the following result (see also~\cite{Sch1951,Spe1970,JCB1978}).

\begin{thm}[Harmonic decomposition]
  Every finite dimensional representation $\VV$ of the orthogonal group $\OO(3)$ can be decomposed into a direct sum of irreducible representations, each of them being isomorphic to either $\rho_{n}$ or $\hrho_{n}$ on $\HH^{n}(\RR^{3})$, by an equivariant isomorphism.
\end{thm}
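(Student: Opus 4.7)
The plan is to derive the theorem from two classical ingredients: complete reducibility for compact groups and the classification of the irreducible $\SO(3)$-representations. First, since $\OO(3)$ is compact, averaging any inner product on $\VV$ against the normalized Haar measure produces an $\OO(3)$-invariant inner product; with respect to it, the orthogonal complement of any stable subspace is stable, so $\VV$ decomposes as an orthogonal direct sum of irreducible subrepresentations. Only the identification of these irreducibles then remains.

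Next, I would classify the irreducible representations of $\SO(3)$. The goal is to show that $\HH^n(\RR^3)$ is $\SO(3)$-irreducible for every $n$ and that every irreducible $\SO(3)$-representation is equivalent to some $\HH^n$. A convenient route uses the polynomial picture from Section~\ref{subsec:tot-sym-and-hom-pol}: from the $\OO(3)$-equivariant decomposition $\Pn{n}(\RR^3) = \Hn{n}(\RR^3) \oplus \norm{\xx}^2 \Pn{n-2}(\RR^3)$, an induction on $n$ yields the harmonic projection as an intertwiner and gives irreducibility of $\HH^n$; complementarily, the highest-weight analysis of $\mathfrak{so}(3,\CC) \cong \mathfrak{sl}(2,\CC)$, applied to the complexification, ensures that every $\SO(3)$-irreducible has odd dimension $2n+1$ and must therefore be isomorphic to $\HH^n$.

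Third, I would lift the classification to $\OO(3)$. Since $-\Id$ is central and $\OO(3) = \SO(3) \times \set{\pm\Id}$ as a direct product, Schur's lemma forces $-\Id$ to act on any $\OO(3)$-irreducible as a scalar $\pm 1$, and these two possibilities on $\HH^n$ are exactly $\rho_n$ and $\hrho_n$ as recorded in Remark~\ref{rem:moinsId}. To decompose a general $\VV$, I first decompose it under $\SO(3)$ into isotypic components $\HH^n \otimes W_n$, and then split each multiplicity space $W_n$ into the $\pm 1$-eigenspaces of the involution induced by $-\Id$: the $+1$-eigenspace contributes copies of $\rho_n$ when $n$ is even and of $\hrho_n$ when $n$ is odd (and conversely for the $-1$-eigenspace), yielding the desired equivariant isomorphism.

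The main conceptual obstacle is the irreducibility and exhaustiveness of the $\HH^n$ family under $\SO(3)$, which is the genuinely non-trivial content; once this is granted, the lift to $\OO(3)$ via the central element $-\Id$ reduces to linear-algebraic bookkeeping. The delicate points to watch are the $\OO(3)$-equivariance of the splitting $\Pn{n} = \Hn{n} \oplus \norm{\xx}^2 \Pn{n-2}$ (so that the harmonic projection commutes with the group action) and the correct parity identification between the action of $-\Id$ and the twist in $\hrho_n$, which is exactly what Remark~\ref{rem:moinsId} encodes.
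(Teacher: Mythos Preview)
Your proof outline is correct and follows the standard route (complete reducibility from compactness, classification of $\SO(3)$-irreducibles via harmonic polynomials or highest weights, then the lift to $\OO(3)$ using the central splitting $\OO(3)\cong\SO(3)\times\set{\pm\Id}$). However, the paper does not prove this theorem at all: it is stated as a classical result with references to~\cite{Sch1951,Spe1970,JCB1978}, and is used as background in~\autoref{subsec:harmonic-decomposition} rather than as something established within the article. So there is no ``paper's own proof'' to compare against; you have simply supplied a proof where the authors chose to cite the literature.
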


\subsection{Covariant operations on tensors}
\label{subsec:covariant-operations}

Given $\bT \in \TT^{n}(\RR^{3})$, we define the total symmetrization of $\bT$, noted $\bT^{s} \in \Sym^{n}(\RR^{3})$, as
\begin{equation*}
  \bT^{s}(\xx_{1},\dotsc,\xx_{n}) : = \frac{1}{n!}\sum_{\sigma \in \mathfrak{S}_{n}} \bT(\xx_{\sigma(1)},\dotsc,\xx_{\sigma(n)}).
\end{equation*}
This operation is \emph{covariant} for both the natural and the twisted representations, which means that
\begin{equation*}
  (\varrho(g)\bT)^{s} = \varrho(g)(\bT^{s}),\quad \forall g\in \OO(3),\quad \varrho = \rho\text{ or } \hrho.
\end{equation*}

The \emph{symmetric tensor product} between two symmetric tensors $\bS^{1} \in \Sym^{p}(\RR^{3})$ and $\bS^{2} \in \Sym^{q}(\RR^{3})$ is
\begin{equation*}
  \bS^{1} \odot \bS^{2} : = (\bS^{1} \otimes \bS^{2})^{s} \in \Sym^{p + q}(\RR^{3}).
\end{equation*}
This operation is \emph{covariant} for the natural representation, but not for the twisted one. For any $g\in \OO(3)$, we get indeed:
\begin{equation*}
  (\rho(g) \bS^{1}) \odot (\rho(g) \bS^{2}) = \rho(g) (\bS^{1} \odot \bS^{2}),\qquad
  (\hrho(g)\bS^{1}) \odot (\hrho(g) \bS^{2}) = \rho(g) (\bS^{1} \odot \bS^{2}).
\end{equation*}

The \emph{$r$-contraction} between totally symmetric tensors $\bS^{1} \in \Sym^{p}(\RR^{3})$ and $\bS^{2} \in \Sym^{q}(\RR^{3})$ is defined in any orthonormal basis as
\begin{equation*}
  (\bS^{1} \rcont \bS^{2})_{i_{1} \dotsb i_{p-r}j_{r+1} \dotsb j_{q}} = S^{1}_{i_{1}\dotsb i_{p-r}k_{1}\dotsb k_{r}} \, S^{2}_{k_{1}\dotsb k_{r}j_{r+1} \dotsb j_{q}},
\end{equation*}
if $r\leq \min(p,q)$ and is zero otherwise, where the summation convention on repeated indices is used.

This operation is \emph{covariant} for the natural representation, but not for the twisted one. For any $g\in \OO(3)$, we get indeed:
\begin{equation*}
  (\rho(g) \bS^{1}) \rcont (\rho(g) \bS^{2}) = \rho(g) (\bS^{1} \rcont \bS^{2}),\qquad
  (\hrho(g) \bS^{1}) \rcont (\hrho(g) \bS^{2}) = \rho(g) (\bS^{1} \rcont \bS^{2}).
\end{equation*}

For a given vector $\ww\in \RR^{3}$, we set
\begin{equation*}
  \ww^k : = \underbrace{\ww \otimes \ww \otimes \dotsb \otimes \ww}_\textrm{$k$ times},
\end{equation*}
with components $\ww^k_{i_{1} i_{2}\dotsc i_k} = w_{i_{1}}w_{i_{2}}\dotsc w_{i_k}$, so that the $r$-contraction between a $n$th-order symmetric tensor $\bS \in \Sym^{n}(\RR^{3})$ and $\ww^r$ (where $r\leq n$) reads
\begin{equation*}
  (\bS \rcont \ww^r)_{i_{1}i_{2}\dotsc i_{n-r}} = S_{i_{1}i_{2}\dotsc i_{n-r}j_{1}\dotsc j_{r}}w_{j_{1}}\dotsc w_{j_r},
  \qquad \bS\rcont \ww^r\in  \Sym^{n-r}.
\end{equation*}
The tensorial operation $(\bS,\ww)\mapsto \bS\rcont \ww^r$ ($r$-contraction with $\ww^r = \ww \otimes \dotsc \otimes \ww$) is $\SO(3)$-covariant but not always $\OO(3)$-covariant since
\begin{equation*}
  (\hrho(g) \ww)^r = (\det(g))^r \rho(g) (\ww^{r})
\end{equation*}
and thus
\begin{equation*}
  (\hrho(g)\bS) \rcont (\hrho(g) \ww)^r = (\det(g))^{r+1} \rho(g)(\bS\rcont \ww^{r}).
\end{equation*}

The \emph{generalized cross product} between two totally symmetric tensors $\bS^{1} \in \Sym^{p}(\RR^{3})$ and $\bS^{2} \in \Sym^{q}(\RR^{3})$ is defined as~\cite{OKDD2018}
\begin{equation}\label{eq:cross-product}
  \bS^{1} \times \bS^{2} : = - \left(\bS^{1}\cdot\lc \cdot \bS^{2}\right)^{s} \in \Sym^{p + q -1}(\RR^{3}).
\end{equation}
In any orthonormal basis, it writes as
\begin{equation*}
  (\bS^{1}\times\bS^{2})_{i_{1}\dotsb i_{p+q-1}} : = (\varepsilon_{i_{1}jk}S^{1}_{ji_{2}\dotsb i_{p}}S^{2}_{ki_{p+1} \dotsb i_{p+q-1}})^{s}.
\end{equation*}
This operation is \emph{covariant} for the special orthogonal group $\SO(3)$, but not for the full orthogonal group $\OO(3)$. In that case, for any $g\in \OO(3)$ we have:
\begin{equation*}
  (\rho(g) \bS^{1}) \times (\rho(g)\bS^{2}) = \hrho(g) \left(\bS^{1} \times \bS^{2}\right), \qquad
  (\hrho(g) \bS^{1}) \times (\hrho(g) \bS^{2}) = \rho(g) \left(\bS^{1} \times \bS^{2}\right).
\end{equation*}

\subsection{Covariant operations on polynomials}
\label{subsec:polynomial-counter-parts}

Each covariant tensorial operation between \emph{totally symmetric tensors} has its covariant polynomial counterpart (see~\cite{OKDD2018}). Let $\bS^{1}, \bS^{2}$ be two totally symmetric tensors of respective order $p$, $q$ and let $\rp_{1}, \rp_{2}$ be their respective polynomial counter-part.

\begin{enumerate}
  \item The symmetric tensor product translates into the \emph{ordinary product} between polynomials
        \begin{equation*}\label{eq:SymProduct_Product}
          \varphi(\bS^{1} \odot \bS^{2}) = \rp_{1}\, \rp_{2}.
        \end{equation*}

  \item The generalized cross product translates into \emph{Lie-Poisson bracket} of $\rp_{1}$ and $\rp_{2}$
        \begin{equation*}\label{eq:CrossProduct_Poisson}
          \varphi(\bS^{1}\times\bS^{2}) = \frac{1}{pq} \poisson{\rp_{1}}{\rp_{2}}_\textrm{LP}
        \end{equation*}
        where
        \begin{equation*}
          \poisson{\rp_{1}}{\rp_{2}}_\textrm{LP} : = \det(\xx,\nabla \rp_{1}, \nabla \rp_{2}),
        \end{equation*}
        and $\nabla \rp$ is the gradient of $\rp$;

  \item The symmetric $r$-contraction translates into the \emph{Euclidean transvectant} of order $r$
        \begin{equation*}\label{eq:euclidean-transvectant}
          \varphi((\bS^{1} \overset{(r)}{\cdot} \bS^{2})^s) = \frac{(p-r)!}{p!}\frac{(q-r)!}{q!} \poisson{\rp_{1}}{\rp_{2}}_{r}
        \end{equation*}
        where
        \begin{equation*}
          \poisson{\rp_{1}}{\rp_{2}}_{r} : = \sum_{k_{1}+k_{2}+k_{3} = r} \frac{r!}{k_{1}!k_{2}!k_{3}!}\frac{\partial^r \rp_{1}}{\partial x^{k_{1}}\partial y^{k_{2}}\partial z^{k_{3}}}\frac{\partial^r \rp_{2}}{\partial x^{k_{1}}\partial y^{k_{2}}\partial z^{k_{3}}}.
        \end{equation*}

  \item Let $\bnu$ be a vector. The symmetric $q$-contraction $(\bS \overset{(q)}{\cdot} \bnu^q)^s=\bS \overset{(q)}{\cdot} \bnu^q$ between $\bS^{1}=\bS$ and $\bS^{2}=\bnu^q=\underbrace{\bnu \otimes \bnu \otimes \dotsb \otimes \bnu}_{q\: \mathrm{ times }}$ translates into
        \begin{equation*}
          \varphi(\bS \overset{(q)}{\cdot} \bnu^q) = \frac{(p-q)!}{p!q!} \poisson{\rp}{(\bnu\cdot \xx)^q}_{q}.
        \end{equation*}
        where $\rp_{1} = \rp = \varphi(S)$ and $\rp_{2}=(\bnu\cdot \xx)^q$.
\end{enumerate}

\section{Symmetry group and order-two symmetries}
\label{subsec:symmetries}

Given any representation $\varrho$ of $\OO(3)$ on $\VV$, the \emph{isotropy group} (or \emph{symmetry group}) of a tensor $\bT\in \VV$ is the subgroup
\begin{equation*}
  G_{\bT} : = \set{g\in \OO(3),\quad \varrho(g)\bT = \bT}.
\end{equation*}
In the present work, we focus on \emph{order-two symmetries}, that is on symmetries $g \in G_{\bT}$ such that
\begin{equation*}
  g \ne \Id, \qquad g^{2} = \Id.
\end{equation*}
It is thus useful to recall that there are three types of order-two elements in the group $\OO(3)$:
\begin{enumerate}
  \item the \emph{central symmetry} $-I$;
  \item \emph{plane symmetries}, which are characterized by a unit vector $\bnu$. The symmetry with respect to the plane $\bnu^\perp$ is the orthogonal transformation $\bs(\bnu) : = \Id-2 \bnu \otimes \bnu$;
  \item \emph{order-two rotational symmetries} (axial symmetries), which are rotations by angle $\pi$ around some axis $\langle\bnu\rangle$. Note that $\br(\bnu,\pi) : = -\bs(\bnu)$.
\end{enumerate}t

\begin{defn}[Order-two symmetries]
  Let $\varrho$ be a representation of $\OO(3)$ on $\VV$, $\bnu$ be a unit vector in $\RR^{3}$ and $\bT\in \VV$. Then $\bnu^{\perp}$ is a \emph{symmetry plane} of $\bT$ if
  \begin{equation}\label{eq:Planar_Equation}
    \varrho(\bs(\bnu)) \bT = \bT,
  \end{equation}
  and the axis $\langle\bnu\rangle$ is a \emph{symmetry axis} of $\bT$ if
  \begin{equation}\label{eq:Flip_Equation}
    \varrho(\br(\bnu,\pi))\bT = \bT.
  \end{equation}
\end{defn}

\begin{rem}\label{rem:reduction-to-e3}
  Note that for any $g \in \OO(3)$, $G_{\varrho(g)\bT} = gG_{\bT}g^{-1}$. In particular, given a unit vector $\bnu$, we can choose $g\in \SO(3)$ such that $g\bnu = \ee_3$. Then $\bnu^{\perp}$ is a  symmetry plane of $\bT$ if and only if $\ee_3^{\perp}$ is a symmetry plane of $\varrho(g)\bT$. Similarly, $\langle\bnu\rangle$ is a symmetry axis of $\bT$ if and only if $\langle\ee_3\rangle$ is a symmetry axis of $\varrho(g)\bT$.
\end{rem}

The equations for the existence of an axial symmetry for the natural representation or a plane/axial symmetry for the twisted representation of a tensor $\bT\in \TT^{n}(\RR^{3})$ can be reduced to a condition on $\rho_{n}(\bs(\bnu))\bT$. These conditions are given in Table~\ref{tab:second-order-sym} and are deduced from
\begin{equation*}
  \rho_{n}(\bs(\bnu))  \bT = -\hrho_{n}(\bs(\bnu)) \bT = (-1)^{n} \rho_{n}(\br(\bnu,\pi)) \bT = (-1)^{n}\hrho_{n}(\br(\bnu,\pi)) \bT.
\end{equation*}

\begin{table}[H]
  \begin{center}
    \begin{tabular}{ccc}
      \toprule
                                                                             & Natural representation $\rho$                                         & Twisted representation $\hrho$ \\
      \midrule
      $n$ even                                                               & $\rho_{n}(\br(\bnu,\pi))\bT = \bT \iff \rho_{n}(\bs(\bnu)) \bT = \bT$ &
      $\begin{aligned}
          \hrho_{n}(\bs(\bnu)) \bT = \bT     & \iff \rho_{n}(\bs(\bnu)) \bT = -\bT
          \\
          \hrho_{n}(\br(\bnu,\pi)) \bT = \bT & \iff \rho_{n}(\bs(\bnu)) \bT = \bT
        \end{aligned}$
      \\
      \midrule
      $n$ odd                                                                &
      $\rho_{n}(\br(\bnu,\pi))\bT = \bT \iff \rho_{n}(\bs(\bnu)) \bT = -\bT$ &
      $\begin{aligned}
          \hrho_{n}(\bs(\bnu)) \bT = \bT     & \iff \rho_{n}(\bs(\bnu)) \bT = -\bT
          \\
          \hrho_{n}(\br(\bnu,\pi)) \bT = \bT & \iff \rho_{n}(\bs(\bnu)) \bT = - \bT
        \end{aligned}$
      \\
      \bottomrule
    \end{tabular}
  \end{center}
  \caption{Order-two symmetries}\label{tab:second-order-sym}
\end{table}

\begin{ex}
  Let $\ba$ be a symmetric second-order tensor with three distinct eigenvalues. In some orthonormal basis $(\ee_{i})$ we have:
  \begin{equation*}
    \ba = \begin{pmatrix}
      \lambda_{1} & 0           & 0         \\
      0           & \lambda_{2} & 0         \\
      0           & 0           & \lambda_3
    \end{pmatrix},\quad \lambda_{i}\neq \lambda_j \text{ for } i\neq j.
  \end{equation*}

  Under the representation $\rho_{2}$, it has three symmetry planes whose normals $\bnu = \ee_{i}$ ($i = 1,2,3$) are also symmetry axes and $\ba$ is said to be orthotropic.

  Under the representation $\hrho_{2}$, the situation is more subtle:
  \begin{itemize}
    \item If $\lambda_{1}\lambda_{2}\lambda_3\neq 0$ or $\Pi_{i\neq j} (\lambda_{i}+\lambda_j)\neq 0$ then $\ba$ has three symmetry axes $\langle\bnu\rangle = \langle \ee_{i}\rangle$ ($i = 1,2,3$)  but no symmetry plane;
    \item If $\lambda_{i}\neq 0$ for some $i$ and $\lambda_j+\lambda_k = 0$ with $\lambda_j\neq0$ and $j,k\neq i$, then $\ba$ has also two symmetry planes, of normals $(\ee_j\pm\ee_k)/\sqrt{2}$.
  \end{itemize}
\end{ex}

\section{Reduced algebraic equations for second-order symmetries}
\label{sec:reduced-equations}

Let $\bnu$ be a unit vector and $\bs(\bnu)=\Id-2 \bnu \otimes \bnu$ be the corresponding plane symmetry. The action of $\bs(\bnu)$ on a tensor $\bS\in(\Sym^{n}(\RR^{3}),\rho_{n})$ writes as
\begin{equation*}
  \rho_{n}(\bs(\bnu)) \bS = \sum_{k = 0}^{n} (-2)^k \binom{n}{k} \contsnu{k}{k}
\end{equation*}
and thus, the equation
\begin{equation*}
  \rho_{n}(\bs(\bnu))\bS = \bS
\end{equation*}
leads to a system of $(n+1)(n+2)/2$ polynomial equations in $\bnu$, each of them being homogeneous of degree $2n$. For instance, when $\bS = \ba \in (\Sym^{2}(\RR^{3}),\rho_{2})$ is of order 2, the equations
\begin{equation}\label{eq:rhosaegala}
  \rho_{2}(\bs(\bnu)) \ba = \ba,
\end{equation}
leads to
\begin{equation}\label{eq:rhosaegala_pol}
  \ba-4 \bnu \odot (\ba \cdot \bnu)+4 ( \bnu\cdot \ba \cdot\bnu)\, \bnu \otimes \bnu = \ba ,
\end{equation}
which is a system of six polynomial equations in $\bnu = (x,y,z)$, each of them being of degree $4$.

Now, one can recast the problem \eqref{eq:rhosaegala} in a different way by observing that $\bnu^{\perp}$ is a symmetry plane of $\ba$ if and only if $\bnu$ is an eigenvector of $\ba$. In other words
\begin{equation*}
  \rho_{2}(\bs(\bnu)) \ba = \ba \iff \ba\cdot\bnu = \lambda\bnu \iff (\ba\cdot\bnu)\times \bnu = 0,
\end{equation*}
where $\times$ stands here for the standard cross product between vectors. In this reformulation of the problem, one obtains, however three polynomial equations in $\bnu$, each of them being of degree $2$. One has therefore divided the degree of the equations by 2, compared to~\eqref{eq:rhosaegala}.

\emph{Theorem~\ref{thm:main} is a generalization of this simple observation to any totally symmetric $n$th-order tensor or pseudo-tensor}. The extension to any (pseudo-)tensors and, more generally, to any finite dimensional representation $\varrho$ of the orthogonal group $\OO(3)$ follows, using the harmonic decomposition, and will be illustrated in the next sections.

To formulate our result, we introduce the following notations. Given a unit vector $\bnu$, we define
\begin{equation*}
  \bnu^{k} := \underset{k \, \text{times}}{\underbrace{\bnu \otimes \bnu \otimes \dotsb \otimes \bnu}} \in \Sym^{k}(\RR^{3})
\end{equation*}
as the tensorial product of $k$ copies of the vector $\bnu$. Given $n \ge 1$, we set $r := \lfloor n/2 \rfloor$, $q := \lfloor (n+1)/2 \rfloor$, and define the $(n+1)\times (n+1)$ matrix $B_{n}$ by
\begin{equation}\label{eq:Bij}
  \left( B_{n}\right)_{ij} : =
  \begin{cases}
    \dfrac{(i)!}{(i-j)!} \text{ if } i-j\geq 0, \\
    0 \text{ otherwise},
  \end{cases}
\end{equation}
where $i,j\in [0, n]$. We denote by $B^{1}_{n}$ the $q \times q$ matrix obtained from $B_{n}$ by deleting columns
\begin{equation*}
  0, q+1,q+2, \dotsc ,n,
\end{equation*}
and rows
\begin{equation*}
  0, 1, 3, \dotsc 2r-1,
\end{equation*}
and by $\tilde B^{1}_{n}$ the $(r+1) \times (r+1)$ matrix obtained from $B_{n}$ by deleting columns
\begin{equation*}
  r+1,r+2, \dotsc ,n,
\end{equation*}
and rows
\begin{equation*}
  0, 2, 4, \dotsc ,2q-2.
\end{equation*}
Finally, we set
\begin{equation*}
  \lambda^{(n)}_{j} := (B^{1}_{n})^{-1}_{jq}\quad (j\in [1, q]),
  \quad \text{and} \quad
  \mu^{(n)}_{j} := (\tilde B^{1}_{n})^{-1}_{jr}\quad (j\in [0, r]).
\end{equation*}
The subscripts $i$ and $j$ of the matrix $B^{1}_{n}=((B^{1}_{n})_{ij})$ and  the vector $\lambda^{(n)}=(\lambda^{(n)}_{j})$ varies from 1 to $q$. For matrix $\tilde B^{1}_{n}=((\tilde B^{1}_{n})_{ij})$ and vector $\mu^{(n)}=(\mu^{(n)}_{j})$, $i$ varies from $1$ to $r+1$ and $j$ from $0$ to $r$.

\begin{ex}
  For $n=3$ we get $r=1$, $q=2$ and
  \begin{equation*}
    B_3=\left(
    \begin{array}{cccc}
      1 & 0 & 0 & 0 \\
      1 & 1 & 0 & 0 \\
      1 & 2 & 2 & 0 \\
      1 & 3 & 6 & 6
    \end{array}
    \right)
    ,
    \qquad
    \begin{cases}
      B^{1}_3=
      \left(
      \begin{array}{ccc}
          2 & 2 \\
          3 & 6
        \end{array}
      \right),
      \quad
      \lambda^{(3)}_{1}=-\lambda^{(3)}_{2} =-\frac{1}{3}.
      \\
      \tilde B^{1}_3=\left(
      \begin{array}{cc}
          1 & 1 \\
          1 & 3 \\
        \end{array}
      \right)
      ,
      \quad
      \mu^{(3)}_{0}=-\mu^{(3)}_{1} =-\frac{1}{2}.
    \end{cases}
  \end{equation*}
\end{ex}

\begin{ex}
  For $n=4$ we get $r=q=2$ and
  \begin{equation*}
    B_4=\left(
    \begin{array}{ccccc}
      1 & 0 & 0  & 0  & 0  \\
      1 & 1 & 0  & 0  & 0  \\
      1 & 2 & 2  & 0  & 0  \\
      1 & 3 & 6  & 6  & 0  \\
      1 & 4 & 12 & 24 & 24 \\
    \end{array}
    \right),
    \qquad
    \begin{cases}
      B^{1}_4=\left(
      \begin{array}{cc}
          2 & 2  \\
          4 & 12 \\
        \end{array}
      \right)
      ,
      \qquad
      \lambda^{(4)}_{1}=-\lambda^{(3)}_{2} =-\frac{1}{8}.
      \\
      \tilde B^{1}_4= \left(
      \begin{array}{ccc}
          1 & 1 & 0  \\
          1 & 3 & 6  \\
          1 & 4 & 12 \\
        \end{array}
      \right)
      ,
      \quad
      \mu^{(4)}_{0}=-\mu^{(4)}_{1} =1, \; \mu^{(4)}_{2}= \frac{1}{3}.
    \end{cases}
  \end{equation*}
\end{ex}

\begin{thm}\label{thm:main}
  Let $\Sym^{n}(\RR^{3})$ be the vector space  of totally symmetric tensors of order $n$, and
  \begin{equation*}
    r := \left\lfloor \frac{n}{2} \right\rfloor, \qquad q := \left\lfloor \frac{n+1}{2} \right\rfloor.
  \end{equation*}
  so that $q = r$, if $n$ is even and $q = r+1$, if $n$ is odd.
  \begin{enumerate}
    \item A unit vector $\bnu$ defines a plane/axial symmetry of $\bS\in \Sym^{2r}(\RR^{3})$ for the representation $\rho_{2r}$ or an axial symmetry of $\bS\in \Sym^{2r}(\RR^{3})$ for the representation $\hrho_{2r}$ if and only if
          \begin{equation}\label{eq:symmetric-even-order-tensors}
            \left[\sum_{k = 1}^{q}\frac{n!}{(n-k)!} \, \lambda^{(n)}_{k} \, \contsnu{k-1}{k}\right]\times \bnu = 0, \qquad n = 2r.
          \end{equation}
    \item A unit vector $\bnu$ defines a plane symmetry of $\bS\in \Sym^{2r}(\RR^{3})$ for the representation $\hrho_{2r}$ if and only if
          \begin{equation}\label{eq:symmetric-even-order-pseudo-tensors}
            \sum_{k = 0}^{r} \frac{n!}{(n-k)!}  \, \mu^{(n)}_{k} \, \contsnu{k}{k} = 0, \qquad n = 2r.
          \end{equation}
    \item A unit vector $\bnu$ define a plane/axial symmetry of $\bS\in \Sym^{2r+1}(\RR^{3})$ for the representation $\hrho_{2r+1}$, or an axial symmetry of $\bS\in \Sym^{2r+1}(\RR^{3})$ for the representation $\rho_{2r+1}$ if and only if
          \begin{equation}\label{eq:symmetric-odd-order-pseudo-tensors}
            \left[\sum_{k = 0}^{r} \frac{n!}{(n-k)!}  \, \mu^{(n)}_{k} \, \contsnu{k}{k}\right]\times \bnu = 0, \qquad n = 2r+1.
          \end{equation}
    \item A unit vector $\bnu$ defines a plane symmetry of $\bS\in \Sym^{2r+1}(\RR^{3})$ for the representation $\rho_{2r+1}$ if and only if
          \begin{equation}\label{eq:symmetric-odd-order-tensors}
            \sum_{k = 1}^{q} \frac{n!}{(n-k)!} \, \lambda^{(n)}_{k} \, \contsnu{k-1}{k} = 0, \qquad n = 2r+1.
          \end{equation}
  \end{enumerate}
\end{thm}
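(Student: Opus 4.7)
The plan is to transfer the problem to the language of homogeneous polynomials via the equivariant isomorphism $\varphi : \Sym^{n}(\RR^{3}) \to \Pn{n}(\RR^{3})$ of Section~\ref{subsec:tot-sym-and-hom-pol}. By Remark~\ref{rem:reduction-to-e3} it suffices to treat $\bnu = \ee_{3}$. Writing $\rp := \varphi(\bS)$ in its expansion in powers of $z$,
\begin{equation*}
  \rp(x,y,z) = \sum_{m=0}^{n} a_{m}(x,y)\, z^{m}, \qquad a_{m} \in \Pn{n-m}(\RR^{2}),
\end{equation*}
the four invariance conditions of Table~\ref{tab:second-order-sym} translate into $\rp(x,y,-z) = \pm\rp(x,y,z)$ or $\rp(-x,-y,z) = \pm\rp(x,y,z)$, each of which reduces to the vanishing of $a_{m}$ for a prescribed parity class of $m$: either all odd $m$ (cases~1 and~4) or all even $m$ (cases~2 and~3).

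The next step is to rewrite the tensorial expressions of the theorem as polynomial identities. The formulas of Section~\ref{subsec:polynomial-counter-parts} yield, at $\bnu = \ee_{3}$,
\begin{equation*}
  \varphi\!\left(\contsnu{k}{k}\right) = \frac{(n-k)!}{n!}\, z^{k}\, \frac{\partial^{k}\rp}{\partial z^{k}},
  \qquad
  \varphi\!\left(\contsnu{k-1}{k}\right) = \frac{(n-k)!}{n!}\, z^{k-1}\, \frac{\partial^{k}\rp}{\partial z^{k}}.
\end{equation*}
Inserting the $z$-expansion of $\rp$, the polynomial image $\rp_{W}$ of $\sum_{k}\frac{n!}{(n-k)!}\lambda^{(n)}_{k}\contsnu{k-1}{k}$ takes the form $\sum_{m=1}^{n} C_{m}\, a_{m}(x,y)\, z^{m-1}$ with
\begin{equation*}
  C_{m} := \sum_{k=1}^{\min(q,m)} \lambda^{(n)}_{k}\, \frac{m!}{(m-k)!},
\end{equation*}
and similarly the polynomial image $\rp_{W'}$ of $\sum_{k}\frac{n!}{(n-k)!}\mu^{(n)}_{k}\contsnu{k}{k}$ is $\sum_{m=0}^{n} Q_{m}\, a_{m}(x,y)\, z^{m}$, with an analogously defined $Q_{m}$. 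Seen as a polynomial in the integer $m$, $C_{m}$ has degree $q$ and an automatic root at $m = 0$ (every falling factorial $m(m-1)\cdots(m-k+1)$ vanishes there for $k \geq 1$), while $Q_{m}$ has degree $r$ and no such automatic root. The defining linear equations for $\lambda^{(n)}$ and $\mu^{(n)}$ pin down the values of these polynomials on the rows of $B_{n}$ kept by $B^{1}_{n}$ and $\tilde B^{1}_{n}$; combined with the automatic root of $C_{m}$ at $0$, a dimension count then forces
\begin{equation*}
  C_{m} = c\, m(m-2)(m-4)\cdots(m-2(q-1)),
  \qquad
  Q_{m} = c'\, (m-1)(m-3)\cdots(m-(2r-1)),
\end{equation*}
so that $C_{m}$ vanishes precisely on $\{0,2,4,\ldots,2(q-1)\}$ and $Q_{m}$ vanishes precisely on $\{1,3,\ldots,2r-1\}$, while $C_{n}$ and $Q_{n}$ are non-zero by construction.

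I then conclude each case separately. In cases~2 and~4 (the direct equations), the equation $\rp_{W'} = 0$ or $\rp_{W} = 0$ amounts to $Q_{m}a_{m} = 0$ (resp. $C_{m}a_{m} = 0$) for every $m$, which by the root pattern above is exactly the vanishing of $a_{m}$ on the required parity class. In cases~1 and~3 (the cross-product equations), the Lie-Poisson identity $\varphi(W \times \ee_{3}) = \frac{1}{n-1}\left(x\,\partial_{y} - y\,\partial_{x}\right)\rp_{W}$ reduces the equation to the rotational invariance of $\rp_{W}$ (resp. $\rp_{W'}$) around the $z$-axis. A homogeneous bivariate polynomial of degree $n-m$ is invariant under such rotations if and only if it is a constant multiple of $(x^{2}+y^{2})^{(n-m)/2}$; thus rotational invariance forces $a_{m} = 0$ whenever $n-m$ is odd and is automatic when $m = n$ (the coefficient $a_{n}$ is a scalar). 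The cross product is used precisely in the cases where $C_{n} \neq 0$ or $Q_{n} \neq 0$, neutralising the otherwise spurious constraint on the top coefficient $a_{n}$. The main technical obstacle is the explicit polynomial factorisation of $C_{m}$ and $Q_{m}$: this rests on the combinatorial structure of $B^{1}_{n}$ and $\tilde B^{1}_{n}$ together with the automatic root at $m = 0$ that closes the dimension count.
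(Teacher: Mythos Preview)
Your proof is correct and follows the same overall architecture as the paper's: reduce to $\bnu = \ee_{3}$ via Remark~\ref{rem:reduction-to-e3}, pass to homogeneous polynomials via $\varphi$, identify the tensorial expressions with the differential operators $\sum_{k}\lambda^{(n)}_{k} z^{k-1}\partial_{z}^{k}$ and $\sum_{k}\mu^{(n)}_{k} z^{k}\partial_{z}^{k}$ (the paper's $\mathcal{L}_{n}$ and $\mathcal{K}_{n}$), and use the Lie--Poisson bracket with $z$ to handle the cross-product cases (your rotational-invariance argument is exactly the content of Lemma~\ref{lem:vanishing-LP-bracket}).

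The one substantive difference lies in how you show that the surviving coefficients are nonzero. The paper writes each coefficient $\alpha^{(n)}_{i}$ (your $C_{2i-1}$) as a ratio $\det C^{i}/\det B^{1}$ of minors of the falling-factorial matrix and then invokes the Gessel--Viennot positivity of binomial minors (Lemma~\ref{lem:positive-minor-criteria}). You instead observe that $C_{m}=\sum_{k=1}^{q}\lambda^{(n)}_{k}\,m(m-1)\cdots(m-k+1)$ is a polynomial of degree at most $q$ in $m$ that vanishes at $m=0$ by construction, vanishes at $m=2,4,\ldots,2(q-1)$ by the defining relations for $\lambda^{(n)}$, and takes a nonzero value at $m=n$; hence it factors as $c\,m(m-2)\cdots(m-2(q-1))$, and its nonvanishing at every odd $m$ is immediate since all factors are odd integers. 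The same interpolation argument handles $Q_{m}$. This is more elementary than the paper's route---it bypasses the combinatorial determinant lemma entirely---and as a by-product yields the explicit closed form of the coefficients rather than merely their sign. The paper's approach, on the other hand, makes the invertibility of $B^{1}_{n}$ and $\tilde B^{1}_{n}$ (hence the well-definedness of $\lambda^{(n)}$, $\mu^{(n)}$) manifest as a corollary of a general positivity principle.
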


The following~\autoref{tab:reduced-equations} recapitulates all these conditions.

\begin{table}[H]
  \begin{center}
    \begin{tabular}{ccc}
      \toprule
              & Natural representation & Twisted representation \\
      \midrule
      $n$ even
              &
      \begin{tabular}{c}
        $\rho_{n}(\bs(\bnu)) \bS = \bS \Leftrightarrow\rho_{n}( \br(\bnu,\pi)) \bS = \bS$ \\
        $\Updownarrow$                                                                    \\
        Eq.~\eqref{eq:symmetric-even-order-tensors}
      \end{tabular}
              &
      \begin{tabular}{l}
        $\bullet$ Plane: $\hrho_{n}(\bs(\bnu))  \bS = \bS \Leftrightarrow  $Eq.~\eqref{eq:symmetric-even-order-pseudo-tensors} \\
        $\bullet$ Axial: $\hrho_{n}(\br(\bnu,\pi))  \bS = \bS \Leftrightarrow  $ Eq.~\eqref{eq:symmetric-even-order-tensors}
      \end{tabular}                                \\
      \midrule
      $n$ odd &
      \begin{tabular}{l}
        $\bullet$ Plane: $\rho_{n}(\bs(\bnu)) \bS = \bS \Leftrightarrow $ Eq.~\eqref{eq:symmetric-odd-order-tensors} \\
        $\bullet$ Axial: $\rho_{n}(\br(\bnu,\pi)) \bS = \bS \Leftrightarrow $ Eq.~\eqref{eq:symmetric-odd-order-pseudo-tensors}
      \end{tabular}
              &
      \begin{tabular}{c}
        $\hrho_{n}(\bs(\bnu)) \bS = \bS \Leftrightarrow \hrho_{n}(\br(\bnu,\pi)) \bS = \bS$ \\
        $\Updownarrow$                                                                      \\
        Eq.~\eqref{eq:symmetric-odd-order-pseudo-tensors}
      \end{tabular}
      \\
      \bottomrule
    \end{tabular}
  \end{center}
  \caption{Reduced algebraic equations for second-order symmetries.}\label{tab:reduced-equations}
\end{table}

Before providing a proof of theorem~\ref{thm:main}, we explicit these conditions in certain cases.

\begin{ex}\label{ex:even-order-tensors}
  Equation \eqref{eq:symmetric-even-order-tensors} defining a plane/axial symmetry of a totally symmetric tensor or an axial symmetry of a totally symmetric pseudo-tensor
  writes, for orders $n = 2r\leq 10$,
  \begin{description}
    \item[$n = 2$]
          \begin{equation*}
            \left[\bS\cdot \bnu\right] \times \bnu = 0;
          \end{equation*}
    \item[$n = 4$]
          \begin{equation*}
            \left[\bS\cdot \bnu -3 \bnu \odot (\bnu\cdot \bS\cdot \bnu) \right]\times \bnu = 0;
          \end{equation*}
    \item[$n = 6$]
          \begin{equation*}
            \left[\bS\cdot \bnu -5 \contsnu{}{2} + \dfrac{20}{3} \contsnu{2}{3} \right]\times \bnu = 0;
          \end{equation*}
    \item[$n = 8$]
          \begin{equation*}
            \left[\bS\cdot \bnu -7 \contsnu{}{2} + \dfrac{84}{5} \contsnu{2}{3}-14 \contsnu{3}{4} \right]\times \bnu = 0;
          \end{equation*}
    \item[$n = 10$]
          \begin{multline*}
            \left[\bS\cdot \bnu -9 \contsnu{}{2} + \dfrac{216}{7} \contsnu{2}{3} \right. \\
              \left. - 48\contsnu{3}{4} + \dfrac{144}{5}\contsnu{4}{5} \right]\times \bnu = 0.
          \end{multline*}
  \end{description}
\end{ex}

\begin{ex}\label{ex:even-order-pseudo-tensors}
  Equation~\eqref{eq:symmetric-even-order-pseudo-tensors} defining a plane symmetry of a totally symmetric pseudo-tensor
  writes, for orders $n = 2r\leq 10$,
  \begin{description}
    \item[$n = 2$]
          \begin{equation*}
            \bS-2\,\bnu \odot  (\bS \cdot \bnu) = 0;
          \end{equation*}
    \item[$n = 4$]
          \begin{equation*}
            \bS -4\, \bnu \odot \left(\bS\cdot \bnu\right) + 4\bnu \odot \bnu \odot \left(\bnu\cdot \bS\cdot \bnu\right) = 0;
          \end{equation*}
    \item[$n = 6$]
          \begin{equation*}
            \bS -6\, \bnu \odot \left(\bS\cdot \bnu\right) + 12\contsnu{2}{2}-8\contsnu{3}{3} = 0;
          \end{equation*}
    \item[$n = 8$]
          \begin{equation*}
            \bS -8\, \bnu \odot \left(\bS\cdot \bnu\right) + 24\contsnu{2}{2}-32 \contsnu{3}{3} + 16\contsnu{4}{4} = 0;
          \end{equation*}
    \item[$n = 10$]
          \begin{multline*}
            \bS -10\, \bnu \odot \left(\bS\cdot \bnu\right) + 40\contsnu{2}{2} - 80\contsnu{3}{3}  \\
            + 80\contsnu{4}{4} - 32\contsnu{5}{5} = 0.
          \end{multline*}
  \end{description}
\end{ex}

\begin{ex}\label{ex:odd-order-pseudo-tensors}
  Equation~\eqref{eq:symmetric-odd-order-pseudo-tensors} defining a plane/axial symmetry of a totally symmetric pseudo-tensor or an axial symmetry of a totally symmetric tensor
  writes, for orders $n = 2r+1<10$,
  \begin{description}
    \item[$n = 1$]
          \begin{equation*}
            \bS\times \bnu = 0;
          \end{equation*}
    \item[$n = 3$]
          \begin{equation*}
            \left[\bS -3 \bnu \odot \left(\bS\cdot \bnu\right)\right]\times \bnu = 0;
          \end{equation*}
    \item[$n = 5$]
          \begin{equation*}
            \left[\bS -5\, \bnu \odot \left(\bS\cdot \bnu\right) + \frac{20}{3}\, \bnu \odot \bnu \odot \left(\bnu\cdot \bS\cdot \bnu\right)\right]\times \bnu = 0;
          \end{equation*}
    \item[$n = 7$]
          \begin{equation*}
            \left[\bS -7\, \bnu \odot \left(\bS\cdot \bnu\right) + \frac{84}{5}\, \contsnu{2}{2} -14 \, \contsnu{3}{3} \right]\times \bnu = 0;
          \end{equation*}
    \item[$n = 9$]
          \begin{equation*}
            \left[\bS -9\, \bnu \odot \left(\bS\cdot \bnu\right) + \frac{216}{7}\,  \contsnu{2}{2} - 48 \,  \contsnu{3}{3}
              + \frac{144}{5}\,  \contsnu{4}{4} \right]\times \bnu = 0.
          \end{equation*}
  \end{description}
\end{ex}

\begin{ex}\label{ex:odd-order-tensors}
  Equation~\eqref{eq:symmetric-odd-order-tensors} defining a plane symmetry of a totally symmetric tensor writes, for orders $n = 2r+1<10$,
  \begin{description}
    \item[$n = 1$]
          \begin{equation*}
            \bS\cdot \bnu = 0;
          \end{equation*}
    \item[$n = 3$]
          \begin{equation*}
            \bS\cdot \bnu -2\, \bnu \odot \left(\bnu\cdot \bS\cdot \bnu\right) = 0;
          \end{equation*}
    \item[$n = 5$]
          \begin{equation*}
            \bS\cdot \bnu -4\, \bnu \odot \left(\bnu\cdot \bS\cdot \bnu\right)+ 4\,  \bnu \odot \bnu \odot\left(\left(\bnu\cdot \bS\cdot \bnu\right)\cdot \bnu\right) = 0;
          \end{equation*}
    \item[$n = 7$]
          \begin{equation*}
            \bS\cdot \bnu -6\,  \bnu \odot \left(\bnu\cdot \bS\cdot \bnu\right)+ 12\,  \contsnu{2}{3} - 8\,  \contsnu{3}{4} = 0;
          \end{equation*}
    \item[$n = 9$]
          \begin{equation*}
            \bS\cdot \bnu -8\,  \bnu \odot \left(\bnu\cdot \bS\cdot \bnu\right)+ 24\, \contsnu{2}{3}-32\contsnu{3}{4}
            + 16\, \contsnu{4}{5} = 0.
          \end{equation*}
  \end{description}
\end{ex}

\begin{proof}[Proof of theorem~\ref{thm:main}]
  Let $\bS\in \Sym^{n}(\RR^{3})$ be a totally symmetric tensor of order $n \ge 1$ and let $\rp = \varphi(\bS)\in \Pn{n}(\RR^{3})$ be the corresponding homogeneous polynomial of degree $n$ (see~\autoref{subsec:tot-sym-and-hom-pol}). Then the tensor $\bS$ is invariant under some transformation $g \in \OO(3)$ if and only if
  \begin{equation*}
    (\varrho(g)\rp) (\xx) = \rp(\xx), \qquad \forall \xx \in \RR^{3},
  \end{equation*}
  where $\varrho$ stands either for $\rho$ or $\hrho$. We suppose now that $g = \bs(\bnu)$. Without loss of generality, we can choose an orthonormal basis $(\ee_{1},\ee_{2},\ee_3)$ such that $\bnu = \ee_3$ but then
  \begin{gather*}
    \rho(\bs(\bnu)) \rp = \rp \iff \text{$\rp$ is $z$-even} \quad (\rp(x,y,-z) = \rp(x,y,z));
    \\
    \rho(\bs(\bnu)) \rp = -\rp \iff \text{$\rp$ is $z$-odd} \quad (\rp(x,y,-z) = -\rp(x,y,z)).
  \end{gather*}
  We will now use the necessary and sufficient conditions obtained in \autoref{app:odd-even-polynomials}, namely theorems~\ref{thm:z-even-CNS} and~\ref{thm:z-odd-CNS} which characterize $z$-even or odd homogeneous polynomials. We observe that the scalar product $\bnu \cdot \xx$ and the \emph{Euclidean transvectant} $\poisson{\rp}{(\bnu \cdot \xx)^{k}}_{k}$ defined in~\autoref{subsec:polynomial-counter-parts} write as follows when $\bnu = \ee_3$.
  \begin{align*}
    \bnu \cdot \xx                          & = z ,
    \\
    \poisson{\rp}{(\bnu \cdot \xx)^{k}}_{r} & =  \frac{k!}{(k-r)!} z^{k-r} {\partial_{z}}^{r}\rp, \quad \text{for} \quad k \ge r.
  \end{align*}
  We are thus lead to introduce the following linear operators:
  \begin{equation*}
    \mathcal{L}^{\bnu}_{n}(\rp) : = \sum_{k = 1}^{q} \frac{1}{k!} \lambda^{(n)}_{k} (\bnu \cdot \xx)^{k-1} \poisson{\rp}{(\bnu \cdot \xx)^{k}}_{k}
  \end{equation*}
  and
  \begin{equation*}
    \mathcal{K}^{\bnu}_{n}(\rp) : = \sum_{k = 0}^{r} \frac{1}{k!} \mu^{(n)}_{k} (\bnu \cdot \xx)^{k} \poisson{\rp}{(\bnu \cdot \xx)^{k}}_{k}
  \end{equation*}
  which allows us to formulate conditions which are independent of the particular choice of an orthonormal basis.

  If $n$ is even, then:
  \begin{align*}
    \rho(\bs(\bnu)) \rp = \rp  & \iff  \poisson{\mathcal{L}^{\bnu}_{n}(\rp)}{(\bnu \cdot \xx)}_{LP} = 0,
    \\
    \rho(\bs(\bnu)) \rp = -\rp & \iff  \mathcal{K}^{\bnu}_{n}(\rp) = 0.
  \end{align*}

  If $n$ is odd, then:
  \begin{align*}
    \rho(\bs(\bnu)) \rp = \rp \iff  & \mathcal{L}^{\bnu}_{n}(\rp) = 0,
    \\
    \rho(\bs(\bnu)) \rp = -\rp \iff & \poisson{\mathcal{K}^{\bnu}_{n}(\rp)}{(\bnu \cdot \xx)}_{LP} = 0.
  \end{align*}

  Now, to obtain these criteria in tensorial notations, we use the tensorial counterparts defined in~\autoref{subsec:polynomial-counter-parts} and check that $\mathcal{L}^{\bnu}_{n}(\rp)$ translate as
  \begin{equation*}
    \bL^{\bnu}_{n}(\bS) := \sum_{k = 1}^{q} \frac{n!}{(n-k)!} \, \lambda^{(n)}_{k} \, \contsnu{k-1}{k},
  \end{equation*}
  and $\mathcal{K}^{\bnu}_{n}(\rp)$ as
  \begin{equation*}
    \bK^{\bnu}_{n}(\bS) := \sum_{k = 0}^{r} \frac{n!}{(n-k)!} \, \mu^{(n)}_{k} \, \contsnu{k}{k}.
  \end{equation*}

  Finally, to achieve the proof, we use the formulas in~\autoref{tab:second-order-sym}.
\end{proof}

\section{Application to fabric tensors of directional data}
\label{sec:fabric-tensors}

A directional density $D(\nn)$, related to any possible 3D direction $\nn$, refers to a scalar property defined directionally  in a continuous manner at the Representative Volume Element scale of continuum mechanics. For example $D(\nn)$ may represent the directional density of spatial contacts and grains orientations within granular materials~\cite{Oda1982,RKM2009,RDAR2012,LD2015,KR2016}.
It may represent the directional description of crack density (representative of spatial arrangement, orientation and geometry of the cracks present at the microscale~\cite{Lad1983,Lad1995,Kan1984,Ona1984,LK1993,Kac1993,TNS2001,CW2010}).
It may also represent the directional (tensorial) description of microstructure degradation by rafting in single crystal superalloys at high temperature~\cite{DMC2017,CDC2018}. Comprehensive descriptions of rafting phenomenon can be found in~\cite{KSKGE2001,IKH+2003}.

A model for directional density is given by a \emph{homogeneous} polynomial $D(\nn)$ of even order $n = 2r$ (as we must have $D(\nn) = D(-\nn)$ over a Representative Volume Element). This homogeneous polynomial in $\nn$ corresponds to a totally symmetric tensor $\bF$ --- a so-called \emph{fabric tensor}~\cite{Kan1984} --- of even order $n = 2r$, and conversely (see~\autoref{subsec:tot-sym-and-hom-pol}):
\begin{equation}\label{eq:DefF}
  D(\nn) =  \bF(\nn, \nn, \dotsc, \nn)= \bF \overset{(n)}{\cdot}  \nn^{n},
  \qquad
  \norm{\nn} = 1,
\end{equation}
where the contraction $\bF \overset{(n)}{\cdot}  \nn^{n}$ is the scalar product between the two $n^{th}$ order symmetric tensors $\bF$ and $\nn^n=\nn \otimes \nn \dotsc \otimes \nn$.

The fabric tensor $\bF$ can be determined from the least square error approximation of an experimental (measured) density distribution $D^\textrm{exp}(\nn)$. As $\bF$ is totally symmetric, equation~\eqref{eq:symmetric-even-order-tensors} in theorem~\ref{thm:main} determining the unit normals $\bnu$ to all symmetry planes directly applies to fabric tensors of any order $n = 2r$ with $\bS=\bF\in \Sym^{2r}$ and representation $\rho_{2r}$ (see Example~\ref{ex:even-order-tensors} for the cases $n = 2$ to 10).

\section{Application to Elasticity tensors}
\label{sec:elasticity-tensors}

In Linear Elasticity theory, we have the relationship
\begin{equation*}
  \bsigma = \bC: \bepsilon,
  \qquad
  \sigma_{ij} = C_{ijkl} \epsilon_{kl}
\end{equation*}
between symmetric stress and strain tensors $\bsigma, \bepsilon \in \Sym^{2}(\RR^{3})$ equipped with the representation $\rho_{2}$. The Elasticity tensor $\bC\in \Ela$ has the index symmetries
$C_{ijkl} = C_{ijlk} = C_{klij}$ and has thus two independent traces, the \emph{dilatation} tensor $\bd$ and \emph{Voigt's} tensor $\bv$:
\begin{equation*}
  \bd : = \tr_{12} \bC = \bC:\Id \quad (d_{ij} = C_{ppij}= C_{ijpp}),
  \qquad
  \bv : = \tr_{13}\bC \quad (v_{ij} = C_{pipj}),
\end{equation*}
which are symmetric second-order tensors.

The reduced equations determining the symmetry planes of a given Elasticity tensor are presented in the following theorem~\ref{thm:elasticity}. Its proof relies on its \emph{harmonic decomposition} (see~\autoref{subsec:harmonic-decomposition}). More precisely, let $\bC^{s}$ be the totally symmetric part of $\bC$
\begin{equation*}
  (\bC^{s})_{ijkl} = \frac{1}{3}(C_{ijkl}+C_{ikjl}+C_{iljk}),
\end{equation*}
$\bb' = \bb-\frac{1}{3}\tr \bb\, \Id$ be the deviatoric part of the second-order symmetric tensor
$\bb = 2(\bd-\bv)$ and $\beta = \frac{1}{6}\tr (\bd-\bv)$. We have then
\begin{equation*}
  \bC = \bC^{s} + \Id \otimes_{(2,2)} \! \bb'+  \beta \,\Id \otimes_{(2,2)}\! \Id,
\end{equation*}
where the Young-symmetrized tensor product $\otimes_{(2,2)}$ of two symmetric second-order tensors $\by, \bz$ is defined as
\begin{equation*}
  \by \otimes_{(2,2)}\!\bz = \frac{1}{3} \big( \by \otimes \bz +  \bz \otimes \by
  - \by \otimesbar \bz - \bz \otimesbar \by \big),\quad (\by \otimesbar \bz)_{ijkl} : = \frac{1}{2} (y_{ik}z_{jl} + y_{il}z_{jk}).
\end{equation*}
This decomposition
\begin{equation}\label{eq:EqIso1}
  \bC\mapsto (\bC^{s},\bb', \beta)\in \Sym^{4}(\RR^{3})\oplus \HH^{2}(\RR^{3})\oplus \HH^{0}(\RR^{3})
\end{equation}
is \emph{equivariant}~\cite{Bac1970}, where $\Sym^{4}(\RR^{3}), \HH^{2}(\RR^{3})$ and $\HH^{0}(\RR^{3})$ are respectively equipped with the representations $\rho_{4}$, $\rho_{2}$ and $\rho_{0}$. We have the following result.

\begin{thm}\label{thm:elasticity}
  Let $\bC$ be an Elasticity tensor, $\bC^{s}$ its totally symmetric part and $\bb = 2(\bd-\bv)$.
  The following conditions are equivalent:
  \begin{enumerate}
    \item  the plane $\bnu^{\perp}$ is a symmetry plane of $\bC$;
    \item  the axis $\langle\bnu\rangle$ is a symmetry axis of $\bC$;
    \item  the following equations are satisfied
          \begin{equation*}
            \left[\bC^{s} \cdot \bnu -3 \bnu\odot \left(\bnu\cdot\bC^{s} \cdot \bnu\right)\right]\times \bnu = 0 \quad \text{and} \quad \left( \bb\cdot \bnu\right)\times \bnu = 0;
          \end{equation*}
    \item the following equations are satisfied
          \begin{equation*}
            \left[\bC^{s} \cdot \bnu -3 \bnu\odot \left(\bnu\cdot\bC^{s} \cdot \bnu\right)\right]\times \bnu = 0 \quad \text{and} \quad \left(\bd \cdot \bnu\right)\times \bnu = 0.
          \end{equation*}
  \end{enumerate}
\end{thm}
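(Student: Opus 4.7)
The plan is to reduce the proof to Theorem~\ref{thm:main} via the equivariant harmonic decomposition $\bC \mapsto (\bC^s, \bb', \beta) \in \Sym^{4}(\RR^{3}) \oplus \HH^{2}(\RR^{3}) \oplus \HH^{0}(\RR^{3})$ recalled just before the statement. Since this decomposition is $\OO(3)$-equivariant, an element $g \in \OO(3)$ belongs to $G_{\bC}$ if and only if it fixes each of the three components simultaneously. The scalar $\beta$ is automatically invariant, while $\bC^s \in \Sym^{4}(\RR^{3})$ and $\bb' \in \HH^{2}(\RR^{3}) \subset \Sym^{2}(\RR^{3})$ are totally symmetric of even order, so they are directly governed by part~(1) of Theorem~\ref{thm:main}.

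The equivalence (1)~$\Leftrightarrow$~(2) is immediate from Table~\ref{tab:second-order-sym}: since $n=4$ is even and we work with the natural representation $\rho_{4}$, the plane and order-two axial symmetry conditions coincide. For (1)~$\Leftrightarrow$~(3), I read off the reduced equation for $\bC^s$ from the $n=4$ case of Example~\ref{ex:even-order-tensors}; this produces exactly the first equation $[\bC^s\cdot\bnu - 3\,\bnu\odot(\bnu\cdot\bC^s\cdot\bnu)]\times\bnu = 0$. Applying the $n=2$ case to $\bb'$ gives $(\bb'\cdot\bnu)\times\bnu = 0$. A short computation of the two partial traces of the Young-symmetrized summands of $\bC$ yields $\bb = 2(\bd-\bv) = \bb' + 4\beta\,\Id$, hence $(\bb'\cdot\bnu)\times\bnu=0$ and $(\bb\cdot\bnu)\times\bnu=0$ are equivalent because $\bnu\times\bnu = 0$.

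For (3)~$\Leftrightarrow$~(4), only the second equations differ, so I only need to show that under the common first equation, $(\bb\cdot\bnu)\times\bnu = 0$ is equivalent to $(\bd\cdot\bnu)\times\bnu = 0$. The key observation is that this first equation forces $\bnu^{\perp}$ to be a symmetry plane of $\bC^s$; by equivariance of the trace, $\bnu$ is then an eigenvector of the symmetric second-order tensor $\bT := \tr_{12}\bC^s$, so $(\bT\cdot\bnu)\times\bnu = 0$. The same trace bookkeeping as above produces $\bd = \bT + \tfrac{1}{3}\bb' + \tfrac{4}{3}\beta\,\Id$, so $(\bd\cdot\bnu)\times\bnu = \tfrac{1}{3}(\bb'\cdot\bnu)\times\bnu$, which completes the equivalence.

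The only non-routine step is the trace bookkeeping for the Young-symmetrized product $\otimes_{(2,2)}$. Because this product couples the two index pairs in a nontrivial way, the partial traces $\tr_{12}$ and $\tr_{13}$ of $\Id\otimes_{(2,2)}\bb'$ and $\Id\otimes_{(2,2)}\Id$ must be evaluated carefully to recover the coefficients $\tfrac{1}{3}$, $\tfrac{4}{3}$, and $4$; once these are in hand, everything else follows immediately from Theorem~\ref{thm:main} applied componentwise via the equivariance of the harmonic decomposition.
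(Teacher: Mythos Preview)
Your proof is correct and follows essentially the same route as the paper: reduce to Theorem~\ref{thm:main} via the equivariant decomposition $\bC\mapsto(\bC^{s},\bb',\beta)$, then pass between conditions (3) and (4) using that the first equation makes $\bnu$ an eigenvector of $\tr\bC^{s}$ together with the linear relation $\bd=\tr\bC^{s}+\tfrac{1}{3}\bb$. The only cosmetic difference is that the paper argues the cycle $(1)\Rightarrow(2)\Rightarrow(3)\Rightarrow(4)\Rightarrow(1)$ and obtains the key identity directly from $\tr(\bC^{s})=\tfrac{1}{3}(\bd+2\bv)$, whereas you phrase everything as two-way equivalences and extract the same relation by tracing the Young-symmetrized summands; both bookkeepings are equivalent and your coefficients are right.
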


\begin{proof}
  We will prove that $(1) \implies (2) \implies (3) \implies (4)\implies (1)$.

  $(1) \implies (2)$ follows directly from remark~\ref{rem:moinsId} (or Table~\ref{tab:second-order-sym}) with $n = 4$ and standard action $\rho_4$.

  $(2) \implies (3)$: since~\eqref{eq:EqIso1} is equivariant, we have
  \begin{equation*}
    \rho_4(g) \bC = \bC \iff \rho(g) (\bC^{s},\bb', \beta) = (\rho_4(g)\bC^{s},\rho_{2}(g)\bb', \beta) = (\bC^{s},\bb', \beta).
  \end{equation*}
  Hence, $\bnu^{\perp}$ is a symmetry plane of $\bC$ if and only if $\bnu^{\perp}$ is a symmetry plane of $\bC^{s}$ and $\bb$ and the result follows from equations~\eqref{eq:symmetric-even-order-tensors} in theorem~\ref{thm:main}.

  $(3) \implies (4)$: if the first equation in $(3)$ is satisfied, then $\bnu^\perp$ is a symmetry plane of $\bC^s$ by theorem~\ref{thm:main}. Hence, $\bnu^\perp$ is a symmetry plane of its second-order covariant $\tr(\bC^s)$ and thus
  \begin{equation*}
    \tr(\bC^s)\times \bnu = 0.
  \end{equation*}
  But
  \begin{equation*}
    \tr (\bC^s) = \frac{1}{3}(\bd+2 \bv) =  \bd - \frac{1}{3}\bb,
  \end{equation*}
  and we deduce that $\bd\times \bnu = 0$.

  $(4) \implies (1)$: if the equations in $(4)$ are satisfied, then, $\bnu^\perp$ is a symmetry plane of both $\bC^s$ and $\bd$ by theorem~\ref{thm:main}. It is thus a symmetry plane of $\bb$ (and of $\bb'$) since
  \begin{equation*}
    \bb = 3(\bd - \tr (\bC^s)).
  \end{equation*}
  By~\eqref{eq:EqIso1}, we deduce then that $\bnu^\perp$ is a symmetry plane of $\bC$.
\end{proof}

Theorem~\ref{thm:elasticity} is an important improvement to Cowin's theorem~\ref{thm:CM1987}. It determines the normals to all the symmetry planes of an Elasticity tensor $\bC$ and this for any symmetry class. It is moreover constructive.
\begin{itemize}
  \item The first equation $\left[\bC^{s}\cdot \bnu -3 \bnu\odot \left(\bnu\cdot\bC^{s}\cdot \bnu\right)\right]\times \bnu = 0$ is polynomial of degree 4 in $\bnu$. It is stored in a third-order tensor;
  \item The second equation $\left[ \bd \cdot \bnu\right]\times \bnu = 0$ (or $\left[ \bb \cdot \bnu\right]\times \bnu = 0$) is polynomial of degree 2 in $\bnu$ and is stored in a vector.
\end{itemize}
Note finally that, in contrast with Cowin--Mehrabadi theorem, these equations are still useful in the cubic class, for which $\bd' = (\bC:\Id)' = 0$, $\bv' = 0$.

\section{Application to Piezo-electricity tensors}
\label{sec:Piezo-electricity-tensors}

In linear Piezo-electricity, an electric field represented by a vector $\vec E \in \TT^{1}(\RR^{3})$ (with representation $\rho_{1}$) generates a strain, represented by a symmetric second-order tensor $\bepsilon \in \Sym^{2}(\RR^{3})$ (with representation $\rho_{2}$). At vanishing stress, the linear relationship
\begin{equation*}
  \bepsilon = \bP\cdot \vec E,
  \qquad
  \epsilon_{ij} = P_{ijk}E_k
\end{equation*}
defines the Piezo-electricity third-order tensor $\bP\in \Piez \subset \TT^{3}(\RR^{3})$ (with representation $\rho_{3}$)~\cite{EM1990}, with index symmetry $P_{ijk} = P_{jik}$.

As in the preceding section, to apply theorem~\ref{thm:main}, we need to decompose the vector space $\Piez$ into totally symmetric tensor spaces in an equivariant manner~\cite{JCB1978,GW2002}. We will write
\begin{equation*}
  \bP = \bP^{s}+\Id \otimes \ww-\Id\odot \ww +\frac{1}{3}\left(\ba\cdot\lc+{}^{t}(\ba\cdot\lc)\right),
\end{equation*}
where $\bP^{s}$ is totally symmetric part of $\bP$,
\begin{equation*}
  (\bP^{s})_{ijk} = \frac{1}{3}(P_{ijk}+P_{ikj}+P_{kji}),
\end{equation*}
\begin{equation*}
  \ww := \frac{3}{4} \left(\tr_{12} \bP-\tr (\bP^{s})\right),
  \qquad
  \ba := \left(\bP:\lc\right)^{s},
\end{equation*}
and the transpose ${}^t \bT$ is on the left two subscripts (${}^t T_{ijk} : = T_{jik}$). This decomposition induces an equivariant isomorphism
\begin{equation*}
  \bP\in \Piez\mapsto (\bP^{s},\ba,\ww)\in \Sym^{3}(\RR^{3})\oplus \HH^{2}(\RR^{3})\oplus \HH^{1}(\RR^{3}),
\end{equation*}
where $\Sym^{3}(\RR^{3})$, $\HH^{2}(\RR^{3})$ and $\HH^{1}(\RR^{3})$ are respectively equipped with the representations $\rho_{3}$, $\hrho_{2}$ and $\rho_{1}$.

Following the same proof as for theorem~\ref{thm:elasticity} (with the use of reduced equations in theorem~\ref{thm:main} for third-order tensor $\bP^s$, second-order pseudo-tensor $\ba$ and first order tensor $\ww$), we obtain reduced equations for the existence of second-order symmetries of Piezo-electricity tensors.

\begin{thm}\label{thm:piezo-electricity}
  Let $\bP\in \Piez$ be a Piezo-electricity tensor, $\bP^{s}$ its totally symmetric part,
  \begin{equation*}
    \ww : = \frac{3}{4} \left(\tr_{12} \bP-\tr \bP^{s}\right),\quad \text{and} \quad \ba : = \left(\bP:\lc\right)^{s},
  \end{equation*}
  where $\lc$ is the Levi-Civita tensor. Let $\bnu$ be a unit vector.
  \begin{enumerate}
    \item The plane $\bnu^{\perp}$ is a symmetry plane of $\bP$ if and only if
          \begin{equation}\label{eq:Planar_Piezo}
            \begin{cases}
              \bP^{s} \cdot \bnu
              -2 \bnu\odot \left(\bnu\cdot\bP^{s} \cdot \bnu\right) = 0, \\
              \ba-2\bnu\odot (\ba\cdot \bnu) = 0,                        \\
              \ww\cdot \bnu = 0.
            \end{cases}
          \end{equation}
    \item The axis $\langle \bnu\rangle$ is a symmetry axis of $\bP$ if and only if
          \begin{equation}\label{eq:Axis_Piezo}
            \begin{cases}
              \left[\bP^{s} -3 \bnu \odot (\bP^{s}\cdot \bnu)\right]\times \bnu = 0 , \\
              \left[\ba\cdot\bnu\right]\times \bnu = 0,                               \\
              \ww\times  \bnu = 0.
            \end{cases}
          \end{equation}
  \end{enumerate}
\end{thm}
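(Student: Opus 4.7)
The plan is to follow the exact same strategy as the proof of Theorem~\ref{thm:elasticity}: exploit the fact that the decomposition $\bP \mapsto (\bP^s,\ba,\ww)$ is an equivariant isomorphism onto $\Sym^3(\RR^3) \oplus \HH^2(\RR^3) \oplus \HH^1(\RR^3)$ equipped respectively with $\rho_3$, $\hrho_2$, and $\rho_1$. Equivariance means that for any $g \in \OO(3)$,
\begin{equation*}
  \rho_3(g)\bP = \bP \iff \rho_3(g)\bP^s = \bP^s,\ \hrho_2(g)\ba = \ba,\ \rho_1(g)\ww = \ww.
\end{equation*}
Therefore $\bnu^\perp$ (resp.\ $\langle \bnu \rangle$) is a symmetry plane (resp.\ axis) of $\bP$ if and only if it is simultaneously such a symmetry of each of the three harmonic components, each one being then tested with its own representation.

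Next I would simply invoke Theorem~\ref{thm:main} for each component. For a \emph{plane symmetry}: $\bP^s$ is an odd-order tensor under the natural representation, hence characterized by \eqref{eq:symmetric-odd-order-tensors} with $n=3$, giving $\bP^s\cdot\bnu - 2\bnu\odot(\bnu\cdot\bP^s\cdot\bnu)=0$; $\ba$ is an even-order pseudo-tensor under the twisted representation, hence characterized by \eqref{eq:symmetric-even-order-pseudo-tensors} with $n=2$, giving $\ba - 2\bnu\odot(\ba\cdot\bnu)=0$; and $\ww$ is an odd-order tensor under the natural representation, hence by \eqref{eq:symmetric-odd-order-tensors} with $n=1$, giving $\ww\cdot\bnu=0$. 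This yields system~\eqref{eq:Planar_Piezo}.

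For an \emph{axial symmetry}: $\bP^s$ (odd order, natural representation, axial) is handled by \eqref{eq:symmetric-odd-order-pseudo-tensors} with $n=3$, giving $[\bP^s - 3\bnu\odot(\bP^s\cdot\bnu)]\times\bnu=0$; $\ba$ (even order, twisted representation, axial) is handled by \eqref{eq:symmetric-even-order-tensors} with $n=2$, giving $[\ba\cdot\bnu]\times\bnu=0$; and $\ww$ (odd order, natural representation, axial) by \eqref{eq:symmetric-odd-order-pseudo-tensors} with $n=1$, giving $\ww\times\bnu=0$. This yields system~\eqref{eq:Axis_Piezo}.

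There is no genuine obstacle in this scheme: everything reduces to carefully identifying, for each of the three irreducible pieces, which case of Theorem~\ref{thm:main} applies (natural vs.\ twisted representation, even vs.\ odd order, plane vs.\ axial symmetry) by consulting Table~\ref{tab:reduced-equations}. The only point deserving a brief verification is the equivariance of the decomposition itself, in particular the fact that $\ba = (\bP:\lc)^s$ transforms via the twisted representation $\hrho_2$ (because of the presence of the Levi--Civita tensor and the fact that the cross product and $r$-contractions with $\lc$ swap $\rho$ and $\hrho$, as recalled in~\autoref{subsec:covariant-operations}); this is a classical fact from the harmonic decomposition of $\Piez$ recorded in~\cite{JCB1978,GW2002}.
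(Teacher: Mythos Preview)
Your proposal is correct and follows essentially the same approach as the paper: the paper merely states that the result is obtained by following the proof of Theorem~\ref{thm:elasticity}, using the equivariant decomposition $\bP \mapsto (\bP^{s},\ba,\ww)$ and applying Theorem~\ref{thm:main} to each component with the appropriate representation ($\rho_{3}$, $\hrho_{2}$, $\rho_{1}$). Your case-by-case identification of which equation of Theorem~\ref{thm:main} applies to each factor is exactly what the paper leaves implicit.
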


Theorem~\ref{thm:piezo-electricity} is constructive. The sets of equations~\eqref{eq:Planar_Piezo} and~\eqref{eq:Axis_Piezo} determine all second-order symmetries of a Piezo-electricity tensor $\bP$, independently of their symmetry class. In both~\eqref{eq:Planar_Piezo} and~\eqref{eq:Axis_Piezo}, the first equation is polynomial of degree 3 in $\bnu$ (stored in a third-order tensor), the second eqaution is polynomial of degree 2 in $\bnu$ (stored in a second-order tensor) and the third one is linear in $\bnu$ (stored in a vector).

\section{Application to Piezo-magnetism tensors}
\label{sec:Piezo-magnetism-tensors}

In linear Piezo-magnetism, a magnetization represented by a pseudo-vector $\vec M\in \TT^{1}(\RR^{3})$ (with representation $\hrho_{1}$) generates a strain (a symmetric second-order tensor $\bepsilon \in \Sym^{2}(\RR^{3})$ with representation $\rho_{2}$) through a linear relation. At vanishing stress and around initial magnetization $\vec M^{0}$, this writes
\begin{equation*}
  \bepsilon = \bPi\cdot (\vec M-\vec M^{0}),
  \qquad
  \epsilon_{ij} = \Pi_{ijk} (M_{k}-M_{k}^{0}),
  \qquad
  \Pi_{ijk} = \Pi_{jik}
\end{equation*}
where $\bPi \in \Magn \subset \TT^{3}(\RR^{3})$ with representation $\hrho_{3}$ is the \emph{Piezo-magnetism third-order pseudo-tensor}.

The subtlety here is that the correct geometry of the piezo-magnestism tensor (and this stands also for the piezo-electricity tensor) is not the full orthogonal group $\OO(3)$ but the full Lorentz group $\OO(1,3)$. In particular the \emph{time-reversal symmetry} plays an important role for the piezo-magnetism tensor, which changes sign when a time-reversal symmetry is applied (see~\cite{EM1990,KE1990,WG2004}). Thus it may be useful, not only to look for plane/axial symmetries of $\bPi$, but also for combination of a symmetry plane with the central symmetry $-I$ (which acts as the time-reversal symmetry on $\bPi$). Combined with observations in~\autoref{tab:second-order-sym}, and in particular the fact that $\hrho_{3}(\bs(\bnu)) = \hrho_{3}(\br(\bnu))$, we are lead to state the following result which is a consequence of theorem~\ref{thm:main}.

\begin{thm}\label{thm:piezo-magnetism}
  Let $\bPi \in \Magn \subset \TT^{3}(\RR^{3})$ with representation $\hrho_{3}$ be a Piezo-magnetism tensor, $\bPi^{s}$ its totally symmetric part,
  \begin{equation*}
    \ww := \frac{3}{4} \left(\tr_{12} \bPi-\tr \bPi^{s}\right),\quad \text{and} \quad \ba := \left(\bPi:\lc\right)^{s}.
  \end{equation*}
  where $\lc$ is the Levi-Civita tensor. Let $\bnu$ be a unit vector. Then
  \begin{enumerate}

    \item $\hrho_3(\bs(\bnu))\bPi = \bPi$ if and only if
          \begin{equation}\label{eq:Axis_Magn}
            \begin{cases}
              \left[\bPi^{s} -3 \bnu \odot (\bPi^{s}\cdot \bnu) \right]\times \bnu = 0, \\
              \left[\ba\cdot\bnu\right]\times \bnu = 0,                                 \\
              \ww \times \bnu = 0.
            \end{cases}
          \end{equation}

    \item $\hrho_3(\bs(\bnu))\bPi = -\bPi$ if and only if
          \begin{equation}\label{eq:Planar_Magn}
            \begin{cases}
              \bPi^{s} \cdot \bnu -2 \bnu\odot \left(\bnu\cdot\bPi^{s} \cdot \bnu\right) = 0, \\
              \ba-2\bnu\odot (\ba\cdot \bnu) = 0,                                             \\
              \ww \cdot \bnu = 0.
            \end{cases}
          \end{equation}
  \end{enumerate}
\end{thm}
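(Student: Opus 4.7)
The plan is to follow the blueprint of the proof of theorem~\ref{thm:piezo-electricity}, adapting it to account for the twisted nature of $\bPi$. The starting point is the equivariant isomorphism
\begin{equation*}
  \bPi \in \Magn \mapsto (\bPi^{s}, \ba, \ww)\in \Sym^{3}(\RR^{3}) \oplus \HH^{2}(\RR^{3}) \oplus \HH^{1}(\RR^{3}),
\end{equation*}
where the three summands carry the representations $\hrho_{3}$, $\rho_{2}$ and $\hrho_{1}$, respectively. The crucial twist-tracking point is that contraction with the Levi--Civita tensor $\lc$ flips natural into twisted and conversely: hence $\ba = (\bPi:\lc)^{s}$ carries the \emph{natural} representation $\rho_{2}$, in contrast to the piezo-electricity case, while the partial-trace vector $\ww$ inherits the twisted representation $\hrho_{1}$.

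Equivariance of this decomposition translates $\hrho_{3}(\bs(\bnu))\bPi = \pm\bPi$ into three independent conditions on the harmonic components,
\begin{equation*}
  \hrho_{3}(\bs(\bnu))\bPi^{s} = \pm\bPi^{s},\qquad
  \rho_{2}(\bs(\bnu))\ba = \pm\ba,\qquad
  \hrho_{1}(\bs(\bnu))\ww = \pm\ww,
\end{equation*}
with a single consistent choice of sign. Each such scalar condition matches one entry of~\autoref{tab:second-order-sym}, and the corresponding polynomial equation in $\bnu$ is then read off from theorem~\ref{thm:main} via~\autoref{tab:reduced-equations}.

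For case (1), the plus signs apply: $\bPi^{s}$ (order $3$, twisted) yields equation~\eqref{eq:symmetric-odd-order-pseudo-tensors} at $n=3$; $\ba$ (order $2$, natural) yields equation~\eqref{eq:symmetric-even-order-tensors} at $n=2$; and $\ww$ (order $1$, twisted) yields equation~\eqref{eq:symmetric-odd-order-pseudo-tensors} at $n=1$. This is exactly the system~\eqref{eq:Axis_Magn}. For case (2), the identities $\rho_{n}(\bs(\bnu))\bT = -\hrho_{n}(\bs(\bnu))\bT$ let us swap each representation with its twist; one then applies equation~\eqref{eq:symmetric-odd-order-tensors} at $n=3$ for $\bPi^{s}$, equation~\eqref{eq:symmetric-even-order-pseudo-tensors} at $n=2$ for $\ba$, and equation~\eqref{eq:symmetric-odd-order-tensors} at $n=1$ for $\ww$, producing the system~\eqref{eq:Planar_Magn}.

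The only genuinely delicate step is the first one: verifying that the map $\bPi \mapsto (\bPi^{s},\ba,\ww)$ is indeed $\OO(3)$-equivariant for the stated combination of representations. This reduces to tracking the factors of $\det g$ arising from $\rho_{3}(g)\lc=(\det g)\lc$ together with the twisted action on $\bPi$, and checking that partial traces are covariant operations. Once this bookkeeping is done, the three-line system decouples and theorem~\ref{thm:main} delivers each equation mechanically. Note finally that, as recorded in~\autoref{tab:second-order-sym}, $\hrho_{3}(\bs(\bnu))\bPi = \hrho_{3}(\br(\bnu,\pi))\bPi$ in odd order $n=3$ under the twisted representation, so plane and axial symmetries coincide for $\bPi$ and no separate axial case needs to be treated.
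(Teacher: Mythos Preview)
Your proposal is correct and follows essentially the same route as the paper, which merely states that theorem~\ref{thm:piezo-magnetism} is a consequence of theorem~\ref{thm:main} combined with~\autoref{tab:second-order-sym}. You make explicit the key bookkeeping step the paper leaves implicit: that under the twisted representation $\hrho_{3}$ on $\Magn$, the harmonic components $(\bPi^{s},\ba,\ww)$ carry $(\hrho_{3},\rho_{2},\hrho_{1})$ rather than the $(\rho_{3},\hrho_{2},\rho_{1})$ of the piezo-electricity case, the swap coming precisely from the factor of $\det g$ attached to $\lc$.
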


\section{Conclusion}
\label{sec:conclusion}

By exploiting the link between tensorial and polynomial representations of the orthogonal group $\OO(3)$, we  have formulated necessary and sufficient conditions which characterize plane and axial symmetries of any totally symmetric (pseudo-)tensors of order $n \ge 1$. These results are stated in theorem~\ref{thm:main} and detailed for orders $n\leq 10$ in examples~\ref{ex:even-order-tensors} to~\ref{ex:odd-order-tensors}.

The proofs of these results, emphasize the practical interest --- in terms of effective calculus --- of explicit translations between covariant operations on totally symmetric tensors and those on homogeneous polynomials. Among them, in particular, the \emph{generalized cross product} \eqref{eq:cross-product}, which generalizes the vector product in $\RR^{3}$ for totally symmetric tensors of any order.

These results are then extended to any tensor $\bT$, using the harmonic decomposition~\cite{Sch1951,Spe1970}. This decomposition being equivariant, a symmetry of the full tensor has to be checked on each factor. This is done using theorem~\ref{thm:main} for each component of the harmonic decomposition, which is a symmetric tensor. The specific cases of \emph{Elasticity tensors}, \emph{Piezo-electricity tensors} and \emph{Piezo-magnetism pseudo-tensors} have been detailed.

Necessary and sufficient conditions for the existence of order-two symmetries of a given tensor constitute an important first step towards the effective determination of the tensor full symmetry group. These equations solve the problem for any Elasticity tensor~\cite{CVC2001}. But, as pointed out in~\cite{CVC2001}, this is only a first step, as the symmetry group of an odd-order tensor is not generated by its order-two symmetries, in general (this applies, in particular to the Piezo-electricity tensor).

\appendix

\section{Homogeneous polynomials even or odd in one variable}
\label{app:odd-even-polynomials}

In this appendix, we will formulate necessary and sufficient conditions for an homogeneous polynomial $\rp$ of degree $n$ in three variables $(x,y,z)$ to be even or odd in $z$. To do so, we introduce, for $k \ge 1$, the following differential operator
\begin{equation*}
  \mathcal{D}_{k} : = z^{k-1}{\partial_{z}}^{k}.
\end{equation*}
Then, if
\begin{equation*}
  \rp = a_{0}(x,y) + a_{1}(x,y)z + \dotsb + a_{n}(x,y)z^{n},
\end{equation*}
we get
\begin{equation*}
  \mathcal{D}_{k}(\rp) = k! \, a_{k}(x,y)z^{k-1} + \dotsb + \frac{n!}{(n-k)!}a_{n}(x,y)z^{n-1}.
\end{equation*}

We will first look for a necessary and sufficient condition for a polynomial $\rp$ to be $z$-even. In order to achieve this goal, we will start by showing that we can find a linear combination of the $\mathcal{D}_{k}(\rp)$ which cancels all the even coefficients $a_{2i}$ but not the odd coefficients $a_{2i-1}$.

\begin{lem}\label{lem:Ln-even-operator}
  Let $n \ge 1$ and set $q = \lfloor \dfrac{n+1}{2}\rfloor$, $r = \lfloor \dfrac{n}{2}\rfloor$, so that $q = r$, if $n$ is even and $q = r+1$, if $n$ is odd. Then, there exists a unique $q$-tuple $(\lambda^{(n)}_{1}, \dotsc, \lambda^{(n)}_{q})$ of rational numbers, solution of the equations (in the unknowns $(\lambda_{1}, \dotsc, \lambda_{q})$)
  \begin{equation}\label{eq:lambda-equations}
    \sum_{k = 1}^{2i} \frac{(2i)!}{(2i-k)!}\lambda_{k} = 0, \quad \text{for} \quad i = 1, \dotsc , q-1 \quad \text{and} \quad \sum_{k = 1}^{q} \frac{n!}{(n-k)!} \lambda_{k} = 1.
  \end{equation}
  Moreover, if we define the linear operator
  \begin{equation}\label{eq:def-Ln}
    \mathcal{L}_{n} : = \sum_{k = 1}^{q} \lambda^{(n)}_{k} \mathcal{D}_{k},
  \end{equation}
  then, for every homogeneous polynomial $\rp$ of degree $\le n$, we have
  \begin{equation*}
    \mathcal{L}_{n}(\rp) = \sum_{i = 1}^{r} \alpha^{(n)}_{i}a_{2i-1}(x,y)z^{2i-2} + a_{n}z^{n-1},
  \end{equation*}
  where
  \begin{equation*}
    \alpha^{(n)}_{i} = \sum_{k = 1}^{2i-1} \dfrac{(2i-1)!}{(2i-1-k)!} \lambda^{(n)}_{k} \ne 0, \qquad 1 \le i \le r.
  \end{equation*}
\end{lem}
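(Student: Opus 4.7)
The plan is to recast the lemma as a one-variable polynomial interpolation problem and read off all the conclusions from it. First I would compute the action of $\mathcal{D}_{k}$ on a monomial:
\[
\mathcal{D}_{k}(z^{j}) \;=\; z^{k-1}\, \partial_{z}^{k}(z^{j}) \;=\; \frac{j!}{(j-k)!}\, z^{j-1},
\]
with the convention that $j!/(j-k)! = 0$ when $k > j$. Consequently, for any candidate coefficients $\lambda = (\lambda_{1}, \dotsc, \lambda_{q})$, the operator $\sum_{k=1}^{q} \lambda_{k}\, \mathcal{D}_{k}$ sends $z^{j}$ to $Q(j)\, z^{j-1}$, where
\[
Q(x) \;:=\; \sum_{k=1}^{q} \lambda_{k}\, x(x-1)(x-2) \cdots (x-k+1).
\]
The falling factorials $x(x-1) \cdots (x-k+1)$, $k = 1, \dotsc, q$, form a $\QQ$-basis of the space of polynomials of degree $\le q$ that vanish at $0$, so the correspondence $(\lambda_{k}) \leftrightarrow Q$ is a $\QQ$-linear isomorphism.

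Next I would translate the system \eqref{eq:lambda-equations} into conditions on this single polynomial $Q$: the homogeneous equations become $Q(2i) = 0$ for $i = 1, \dotsc, q-1$, and the normalization becomes $Q(n) = 1$. Combined with the automatic identity $Q(0) = 0$, these force $Q$ to have $q$ prescribed roots at the distinct points $0, 2, 4, \dotsc, 2(q-1)$ and to take the value $1$ at $n$. I would then verify that $n$ is distinct from these roots, which holds because $n = 2q$ in the even case (so $n > 2(q-1)$), while $n = 2q-1$ is odd in the odd case. Since $\deg Q \le q$, the polynomial is forced to be
\[
Q(x) \;=\; C\, x(x-2)(x-4) \cdots \bigl(x - 2(q-1)\bigr), \qquad C \;=\; \frac{1}{n(n-2) \cdots \bigl(n-2(q-1)\bigr)} \;\ne\; 0.
\]
This determines $Q$ uniquely, hence also $(\lambda^{(n)}_{1}, \dotsc, \lambda^{(n)}_{q}) \in \QQ^{q}$, and establishes existence, uniqueness and rationality at once.

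To finish, I would apply $\mathcal{L}_{n}$ term by term to a homogeneous polynomial $\rp = \sum_{j=0}^{n} a_{j}(x,y)\, z^{j}$, obtaining $\mathcal{L}_{n}(\rp) = \sum_{j=0}^{n} Q(j)\, a_{j}(x,y)\, z^{j-1}$. By construction $Q$ annihilates $j = 0$ and every even $j$ with $j < n$, while $Q(n) = 1$, so only the odd indices $j = 2i-1$ (for $i = 1, \dotsc, r$) and the top term $j = n$ survive, yielding exactly the claimed expression with $\alpha^{(n)}_{i} = Q(2i-1)$. It then remains to check the nonvanishing of these coefficients; from the product formula,
\[
\alpha^{(n)}_{i} \;=\; C\,(2i-1)(2i-3)(2i-5) \cdots (2i - 2q + 1),
\]
a product of $q$ consecutive odd integers, none of which can vanish. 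Hence $\alpha^{(n)}_{i} \ne 0$ for every $i = 1, \dotsc, r$. The only conceptual step is spotting the polynomial reformulation; once it is in place, the entire argument reduces to Lagrange interpolation and a parity check, so I do not anticipate a real obstacle.
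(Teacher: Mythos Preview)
Your argument is correct and considerably more elementary than the paper's. The paper rewrites the linear system~\eqref{eq:lambda-equations} as $B^{1}X=(0,\dotsc,0,1)^{t}$ and appeals to a total-positivity result for binomial minors (Gessel--Viennot, via lemma~\ref{lem:positive-minor-criteria}) to get invertibility of $B^{1}$; to show each $\alpha^{(n)}_{i}\neq 0$ it then expresses $\alpha^{(n)}_{i}$ via Cramer's rule as a ratio $\det C^{i}/\det B^{1}$ and again invokes the Gessel--Viennot lemma after a row permutation.

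Your observation that $\sum_{k}\lambda_{k}\mathcal{D}_{k}$ acts on $z^{j}$ by multiplication by $Q(j)$, with $Q$ a polynomial of degree $\le q$ vanishing at $0$, bypasses all of this: the constraints become interpolation conditions $Q(0)=Q(2)=\dotsb=Q(2(q-1))=0$, $Q(n)=1$, and since $n$ is not among the roots you get the closed form $Q(x)=C\,x(x-2)\dotsm(x-2(q-1))$ directly. This yields existence, uniqueness, and rationality in one stroke, and the nonvanishing of $\alpha^{(n)}_{i}=Q(2i-1)$ reduces to the trivial remark that a product of odd integers is nonzero. The payoff is a self-contained proof with an explicit formula for $Q$ (hence implicitly for the $\lambda^{(n)}_{k}$), while the paper's approach, though less transparent here, keeps the determinant machinery that it reuses verbatim for the companion operator $\mathcal{K}_{n}$ in lemma~\ref{lem:Kn-odd-operator}.
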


\begin{ex}\label{ex:Ln}
  \begin{itemize}
    \item For $n = 1$, we have $r = 0$, $q = 1$ and
          \begin{equation*}
            \mathcal{L}_{1}(\rp) = \partial_{z}\rp = a_{1}.
          \end{equation*}
    \item For $n = 2$, we have $r = 1$, $q = 1$ and
          \begin{equation*}
            \mathcal{L}_{2}(\rp) = \frac{1}{2}\partial_{z}\rp = \frac{1}{2} a_{1} + a_{2}z.
          \end{equation*}
    \item For $n = 3$, we have $r = 1$, $q = 2$ and
          \begin{equation*}
            \mathcal{L}_{3}(\rp) = -\frac{1}{3}\partial_{z}\rp + \frac{1}{3}z {\partial_{z}}^{2}\rp = -\frac{1}{3}a_{1} + a_{3}z^{2}.
          \end{equation*}
    \item For $n = 4$, we have $r = 2$, $q = 2$ and
          \begin{equation*}
            \mathcal{L}_{4}(\rp) = -\frac{1}{8}\partial_{z}\rp + \frac{1}{8}z {\partial_{z}}^{2}\rp = -\frac{1}{8}a_{1} + \frac{3}{8}a_{3}z^{2} + a_{4}z^{3}.
          \end{equation*}
    \item For $n = 5$, we have $r = 2$, $q = 3$ and
          \begin{equation*}
            \mathcal{L}_{5}(\rp) = \frac{1}{5} \partial_{z}\rp -\frac{1}{5}  z {\partial_{z}}^{2}\rp+ \frac{1}{15}  z^{2} {\partial_{z}}^{3}\rp = \frac{1}{5} a_{1} -\frac{1}{5} a_{3}z^{2} + a_{5}z^{4}.
          \end{equation*}
    \item For $n = 6$, we have $r = 3$, $q = 3$ and
          \begin{equation*}
            \mathcal{L}_{6}(\rp) = \frac{1}{16}\partial_{z}\rp -\frac{1}{16} z {\partial_{z}}^{2}\rp+ \frac{1}{48} z^{2} {\partial_{z}}^{3}\rp =  \frac{1}{16} a_{1} - \frac{1}{16} a_{3}z^{2} +  \frac{5}{16} a_{5}z^{4}+ a_{6}z^{5}.
          \end{equation*}
  \end{itemize}
\end{ex}

In order to prove lemma~\ref{lem:Ln-even-operator}, we introduce the following notations. Given an infinite matrix $M = (M_{ij})_{i,j \ge 0}$, and two subsets $I : = \set{i_{1}, \dotsc , i_{p}}$, $J : = \set{j_{1}, \dotsc , j_{p}}$ of $\NN$, where $i_{1} < \dotsb < i_{p}$ and $j_{1} < \dotsb < j_{p}$, we define the square matrix $M(I,J)$ of size $p$
\begin{equation*}
  M(I,J)_{kl} = M_{i_{k}j_{l}}.
\end{equation*}
Besides, if $D$ is an infinite diagonal matrix $(\lambda_{0} \; \lambda_{1} \; \dotsb )$, we define $D(I)$ as the diagonal matrix $(\lambda_{i_{_{1}}} \, \lambda_{i_{2}} \dotsb \lambda_{i_{p}})$ of size $p$ .

The following result was proved in~\cite[Corollary 2]{GV1985}. If $A = (A_{ij})_{i,j \ge 0}$ is the binomial matrix, where
\begin{equation*}
  A_{ij} : = \begin{cases}
    \displaystyle{\binom{i}{j}}, \quad \text{if} \quad i \geq j, \\
    0, \quad \text{otherwise},
  \end{cases}
\end{equation*}
then $\det A(I,J) > 0$, provided that $j_{k} \le i_{k}$, for $1 \le k \le p$. Now, let $B = (B_{ij})_{i,j \ge 0}$ be the infinite matrix defined by
\begin{equation*}
  B_{ij} : = \begin{cases}
    \dfrac{i!}{(i-j)!}, \quad \text{if} \quad i \geq j, \\
    0, \quad \text{otherwise}.
  \end{cases}
\end{equation*}
and let $D$ be the infinite diagonal matrix $(0! \; 1! \; 2! \; \dotsb)$. Since
\begin{equation*}
  B(I,J) = A(I,J)D(J),
\end{equation*}
we get the following result.

\begin{lem}\label{lem:positive-minor-criteria}
  Let $I : = \set{i_{1}, \dotsc , i_{p}}$ and $J : = \set{j_{1}, \dotsc , j_{p}}$ be two subsets of $\NN$, with $i_{1} < \dotsb < i_{p}$ and $j_{1} < \dotsb < j_{p}$. If $j_{k} \le i_{k}$, for $1 \le k \le p$, then $\det B(I,J) > 0$.
\end{lem}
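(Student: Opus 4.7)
The plan is to reduce the claim about minors of $B$ to the cited positivity result for minors of the binomial matrix $A$. The starting observation, already visible from the definitions, is that the infinite matrix $B$ factors as $B = AD$ where $D = \mathrm{diag}(0!, 1!, 2!, \dots)$: indeed, for $i \ge j$ one has $B_{ij} = i!/(i-j)! = \binom{i}{j}\, j! = A_{ij}\, D_{jj}$, and both sides vanish when $i < j$.

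Next, I would pass from this factorization at the level of infinite matrices to a factorization of the submatrices indexed by the finite sets $I$ and $J$. Because $D$ is diagonal, right-multiplication by $D$ simply rescales column $j$ of any matrix by $j!$; restricting to the columns indexed by $J = \{j_1 < \dots < j_p\}$ therefore yields
\begin{equation*}
  B(I,J) \;=\; A(I,J)\, D(J),
\end{equation*}
where $D(J)$ is the diagonal matrix with entries $j_1!, j_2!, \dots, j_p!$.

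Taking determinants multiplicatively then gives
\begin{equation*}
  \det B(I,J) \;=\; \det A(I,J) \cdot \prod_{k=1}^{p} j_k!\,.
\end{equation*}
The factor $\prod_{k=1}^p j_k!$ is strictly positive. Under the standing hypothesis $j_k \le i_k$ for every $k$, the cited corollary from Gessel--Viennot furnishes $\det A(I,J) > 0$, and the conclusion $\det B(I,J) > 0$ follows at once.

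There is essentially no obstacle in the argument: the one point to verify carefully is the block factorization $B(I,J) = A(I,J)\,D(J)$, which is immediate from the diagonality of $D$, and the substantive input is the Gessel--Viennot positivity of $\det A(I,J)$, which is invoked as a black box.
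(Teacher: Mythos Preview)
Your proof is correct and matches the paper's approach exactly: the paper also observes that $B(I,J) = A(I,J)\,D(J)$ with $D$ the diagonal matrix of factorials, and then invokes the Gessel--Viennot result on $\det A(I,J)$ to conclude. There is nothing to add.
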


\begin{proof}[Proof of lemma~\ref{lem:Ln-even-operator}]
  We look for a solution $(\lambda_{1}, \dotsc , \lambda_{q})$ of~\eqref{eq:lambda-equations}. Set $X : = (\lambda_{1}, \dotsc, \lambda_{q})^{t}$,
  \begin{equation*}
    I^{1} : = \set{2, 4, \dotsc , 2(q-1), n}, \quad J : = \set{1,2, \dotsc , q} \quad \text{and} \quad B^{1} : = B(I^{1},J).
  \end{equation*}
  Then, \eqref{eq:lambda-equations} rewrites as
  \begin{equation*}
    B^{1} X = (0, \dotsc , 0, 1)^{t},
  \end{equation*}
  which has a unique solution $(\lambda^{(n)}_{1}, \dotsc, \lambda^{(n)}_{q})$ of rational numbers, because $B(I^{1},J)$ is an integer matrix, which is invertible by lemma~\ref{lem:positive-minor-criteria}. Now, if $\mathcal{L}_{n}$ is defined by~\eqref{eq:def-Ln} and
  \begin{equation*}
    \rp = a_{0}(x,y) + a_{1}(x,y)z + \dotsb + a_{n}(x,y)z^{n},
  \end{equation*}
  we have
  \begin{equation*}
    \mathcal{L}_{n}(\rp) = \sum_{i = 1}^{r} \alpha^{(n)}_{i}a_{2i-1}(x,y)z^{2i-2} + a_{n}z^{n-1},
  \end{equation*}
  where
  \begin{equation*}
    \alpha^{(n)}_{i} = \sum_{k = 1}^{2i-1} \dfrac{(2i-1)!}{(2i-1-k)!} \lambda^{(n)}_{k}, \qquad 1 \le i \le r.
  \end{equation*}
  Hence, if we set $Y : = (\alpha^{(n)}_{1}, \dotsc, \alpha^{(n)}_{q})^{t}$ (where $\alpha^{(n)}_{q} = 1$ if $n$ is odd),
  \begin{equation*}
    I^{2} : = \set{1, 3, \dotsc , 2q-1} \quad \text{and} \quad B^{2} : = B(I^{2},J),
  \end{equation*}
  we get
  \begin{equation*}
    Y = B^{2} X.
  \end{equation*}
  Therefore
  \begin{equation*}
    \alpha^{(n)}_{i} = \sum_{k = 1}^{q} B^{2}_{ik} \lambda^{(n)}_{k}
  \end{equation*}
  where
  \begin{equation*}
    \lambda^{(n)}_{k} = \frac{1}{\det B^{1}} (-1)^{k+q} \Delta^{1}_{qk},
  \end{equation*}
  and $\Delta^{1}_{qk}$ is the $(q,k)$ minor of $B^{1}$. We have thus
  \begin{equation*}
    \alpha^{(n)}_{i} = \frac{1}{\det B^{1}} \sum_{k = 1}^{q} (-1)^{k+q} B^{2}_{ik} \Delta^{1}_{qk} = \frac{\det C^{i}}{\det B^{1}},
  \end{equation*}
  where $C^{i}$ is the $q \times q$ matrix obtained from $B^{1}$ by substituting its last row
  \begin{equation*}
    (B^{1}_{q1} \; B^{1}_{q2} \; \dotsb \; B^{1}_{qq})
  \end{equation*}
  by the $i$-th row of $B^{2}$
  \begin{equation*}
    (B^{2}_{i1} \; B^{2}_{i2} \; \dotsb \; B^{2}_{iq}).
  \end{equation*}
  For $n = 1$, we have $\alpha^{(1)}_{1} = 1$ and for $n = 2$, we have $\alpha^{(2)}_{1} = 1/2$ (see example~\ref{ex:Ln}). Hence the theorem is true for $n = 1,2$. If $n \ge 3$, we have $q\ge 2$ and, if necessary, by a circular permutation of signature $q-i$, we can uprise this last row so that the new matrix corresponds to $B(K^{i},J)$, where
  \begin{equation*}
    K^{i} : = \set{2, 4, \dotsc , 2i-2, 2i-1, 2i , \dotsc , 2(q-1)}.
  \end{equation*}
  We get finally that $\det C^{i} = (-1)^{q-i} \det B(K^{i},J)$, which doe not vanish by lemma~\ref{lem:positive-minor-criteria}, because $q \ge 2$. This achieves the proof.
\end{proof}

We are now ready to state a necessary and sufficient condition for a homogeneous polynomial of degree $n \ge 1$ to be even in $z$. In the next theorem, we use the Lie-Poisson bracket
\begin{equation*}
  \poisson{f}{g}_{LP}(\xx) : = \det (\xx, \nabla f, \nabla g).
\end{equation*}

\begin{thm}\label{thm:z-even-CNS}
  Let $\rp \in \Pn{n}(\RR^{3})$ be a homogeneous polynomial of degree $n \ge 1$ and let $\mathcal{L}_{n}$ be the differential operator defined by~\eqref{eq:def-Ln} with $\lambda^{(n)}_{k} = (B^{1})^{-1}_{qk}$. We have the following results.
  \begin{enumerate}
    \item If $n$ is odd, then, $\rp$ is $z$-even iff $\mathcal{L}_{n}(\rp) = 0$.
    \item If $n$ is even, then, $\rp$ is $z$-even iff $\poisson{\mathcal{L}_{n}(\rp)}{z}_{LP} = 0$.
  \end{enumerate}
\end{thm}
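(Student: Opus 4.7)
The plan is to exploit the explicit form of $\mathcal{L}_{n}(\rp)$ provided by Lemma~\ref{lem:Ln-even-operator}. Writing $\rp = \sum_{k=0}^{n} a_{k}(x,y)\, z^{k}$ with $a_{k}$ homogeneous of degree $n-k$ in $(x,y)$, the lemma yields
\[
  \mathcal{L}_{n}(\rp) = \sum_{i=1}^{r} \alpha^{(n)}_{i}\, a_{2i-1}(x,y)\, z^{2i-2} + a_{n}\, z^{n-1},
\]
with all $\alpha^{(n)}_{i}\neq 0$. Since $\rp$ is $z$-even exactly when $a_{1}=a_{3}=\cdots =0$, everything reduces to detecting the vanishing of the odd-indexed coefficients.

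First I would dispose of case (1), $n=2r+1$ odd. Then $n-1=2r$ is even, so the monomials $z^{0}, z^{2}, \dots, z^{2r-2}, z^{2r}$ appearing in $\mathcal{L}_{n}(\rp)$ all carry pairwise distinct \emph{even} powers of $z$. Hence $\mathcal{L}_{n}(\rp)\equiv 0$ as a polynomial in $(x,y,z)$ forces each of the coefficients $\alpha^{(n)}_{1} a_{1}, \alpha^{(n)}_{2} a_{3}, \dots, \alpha^{(n)}_{r} a_{2r-1}, a_{n}$ to vanish individually. Non-vanishing of the $\alpha^{(n)}_{i}$ together with $a_{n}=a_{2r+1}$ being the top odd coefficient turn this into exactly the condition that $\rp$ be $z$-even; the converse is a direct substitution.

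For case (2), $n=2r$ even, the obstruction is the single term $a_{n} z^{n-1}$, which now carries an odd power $z^{2r-1}$ but an even-indexed (and, by degree counting, constant) coefficient $a_{n}$. The role of bracketing with $z$ is to erase this term. Since $\nabla z=\ee_{3}$, one gets $\poisson{f}{z}_{LP} = x f_{y} - y f_{x}$, so $\poisson{a_{n} z^{n-1}}{z}_{LP}=0$ and
\[
  \poisson{\mathcal{L}_{n}(\rp)}{z}_{LP} = \sum_{i=1}^{r} \alpha^{(n)}_{i}\, z^{2i-2}\, \bigl(x \partial_{y} a_{2i-1} - y \partial_{x} a_{2i-1}\bigr).
\]
Distinct powers of $z$ again separate these contributions, so vanishing of the bracket is equivalent to $(x\partial_{y} - y\partial_{x})\, a_{2i-1} = 0$ for every $i\in\{1,\dots,r\}$. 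Now $a_{2i-1}(x,y)$ is homogeneous of degree $2r-2i+1$, which is \emph{odd}. I would close by observing that the kernel of the infinitesimal rotation $x\partial_{y}-y\partial_{x}$ on homogeneous polynomials in $(x,y)$ is spanned by powers of $x^{2}+y^{2}$ and is therefore trivial in odd degree, so $a_{2i-1}\equiv 0$ for every $i$, i.e.\ $\rp$ is $z$-even. The reverse implication is immediate by plugging $a_{2i-1}=0$ into the formula for $\mathcal{L}_{n}(\rp)$ and using that $a_{n}$ is constant.

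The only step that requires a genuine observation beyond parity bookkeeping is the last one: recognizing $\poisson{\,\cdot\,}{z}_{LP}$ as the generator of rotations about the $z$-axis and using that such a rotation has trivial kernel on homogeneous polynomials in $(x,y)$ of odd degree. This is the main substantive point; the rest is a direct matching of coefficients once Lemma~\ref{lem:Ln-even-operator} is granted.
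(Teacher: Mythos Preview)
Your proof is correct and follows essentially the same route as the paper: both reduce to the explicit form of $\mathcal{L}_{n}(\rp)$ from Lemma~\ref{lem:Ln-even-operator}, separate coefficients by powers of $z$, and in the even case use that $\poisson{\cdot}{z}_{LP}=x\partial_{y}-y\partial_{x}$ has trivial kernel on odd-degree homogeneous polynomials in $(x,y)$. The paper isolates this last fact as Lemma~\ref{lem:vanishing-LP-bracket} and proves it via polar coordinates ($a=r^{2p+1}f(\theta)$, $\partial_{\theta}a=0$ forces $a=Cr^{2p+1}$, hence $C=0$), whereas you invoke the equivalent statement that rotation-invariant polynomials in $(x,y)$ are generated by $x^{2}+y^{2}$; these are the same observation.
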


For the proof of theorem~\ref{thm:z-even-CNS}, the following result will be useful.

\begin{lem}\label{lem:vanishing-LP-bracket}
  Let $a(x,y)$ be an odd homogeneous polynomial in $(x,y)$. Then
  \begin{equation*}
    \poisson{a(x,y)}{z}_{LP} = 0 \iff a = 0.
  \end{equation*}
\end{lem}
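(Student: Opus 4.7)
The plan is to unfold the Lie-Poisson bracket into an explicit two-variable differential operator, and then invoke the classical characterization of rotationally invariant homogeneous polynomials in two variables.

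First, I would compute the bracket explicitly. Since $a$ depends only on $(x,y)$, one has $\nabla a = (\partial_{x} a, \partial_{y} a, 0)$ and $\nabla z = (0,0,1)$, so expanding
\begin{equation*}
  \poisson{a(x,y)}{z}_{LP} = \det\begin{pmatrix} x & y & z \\ \partial_{x} a & \partial_{y} a & 0 \\ 0 & 0 & 1 \end{pmatrix}
\end{equation*}
along the last row gives $x\,\partial_{y} a - y\,\partial_{x} a$. The reverse implication ($a=0 \Rightarrow$ bracket vanishes) is then immediate, and the forward direction reduces to proving that if $a$ is homogeneous of odd degree $d$ in $(x,y)$ with $(x\partial_{y} - y\partial_{x}) a = 0$, then $a = 0$.

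Second, I would recognize $x\partial_{y} - y\partial_{x}$ as the infinitesimal generator of planar rotations, i.e.\ the operator $\partial_{\theta}$ in polar coordinates $(r,\theta)$. Writing the degree-$d$ homogeneous polynomial as $a = r^{d} f(\theta)$, the equation $\partial_{\theta} a = 0$ forces $f$ to be constant, hence $a = c\,r^{d} = c(x^{2}+y^{2})^{d/2}$. This expression is a polynomial in $(x,y)$ only when $d$ is even, so odd $d$ forces $c = 0$, yielding $a = 0$.

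There is essentially no real obstacle: the whole content is the classical fact that the kernel of the angular-momentum operator $x\partial_{y} - y\partial_{x}$ on the space of degree-$d$ homogeneous polynomials in $(x,y)$ is one-dimensional for even $d$ (spanned by $(x^{2}+y^{2})^{d/2}$) and trivial for odd $d$. If a purely algebraic route were preferred to the polar-coordinate argument, the same conclusion would follow by expanding $a = \sum_{k=0}^{d} c_{k} x^{k} y^{d-k}$ and extracting from $(x\partial_{y} - y\partial_{x}) a = 0$ a linear recursion on the $c_{k}$'s whose only solution in odd degree is the trivial one.
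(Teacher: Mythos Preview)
Your proof is correct and follows essentially the same route as the paper: compute the bracket as $x\partial_{y}a - y\partial_{x}a = \partial_{\theta}a$ in polar coordinates, write $a = r^{d}f(\theta)$, conclude $f$ is constant, and use that $r^{d}$ is not a polynomial for odd $d$. Your write-up is slightly more detailed (the explicit determinant expansion and the alternative algebraic recursion), but the core argument is identical.
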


\begin{proof}
  If $a$ vanishes identically, then $\poisson{a(x,y)}{z}_{LP} = 0$. Conversely, observe that
  \begin{equation*}
    \poisson{a(x,y)}{z}_{LP} = x\partial_{y} a - y\partial_{x} a = \partial_{\theta}a,
  \end{equation*}
  using polar coordinates $(r,\theta)$. But, $a(x,y)$ writes as $r^{2p+1}f_{p}(\theta)$. Hence, if $\poisson{a(x,y)}{z}_{LP} = 0$, then, $a = Cr^{2p+1}$ for some constant $C$ and since $r = \sqrt{x^{2} + y^{2}}$, this leads to $a = 0$.
\end{proof}

We will also make use of the following two properties of the Lie-Poisson bracket, which result from iteration of the Leibniz rule:
\begin{enumerate}
  \item $\poisson{z^{k}}{z}_{LP} = 0$, for every $k \in \NN$;
  \item $\poisson{f z^{k}}{z}_{LP} = \poisson{f}{z}_{LP}z^{k}$, for every function $f$ and every $k \in \NN$.
\end{enumerate}

\begin{proof}[Proof of theorem~\ref{thm:z-even-CNS}]
  Let
  \begin{equation*}
    \rp = a_{0}(x,y) + a_{1}(x,y)z + \dotsb + a_{n}(x,y)z^{n},
  \end{equation*}
  be a homogeneous polynomial of degree $n \ge 1$.

  (1) Suppose first that $n = 2r+1$ is odd. If $\rp$ is $z$-even, then $a_{n} = a_{2r+1}$ and all the terms $a_{2i-1}$ vanish and thus $\mathcal{L}_{n}(\rp) = 0$. Conversely, if $\mathcal{L}_{n}(\rp) = 0$, then,
  \begin{equation*}
    \mathcal{L}_{n}(\rp) = \sum_{i = 1}^{r} \alpha^{(n)}_{i}a_{2i-1}(x,y)z^{2i-2} + a_{2r+1}z^{n-1} = 0,
  \end{equation*}
  where each $\alpha^{(n)}_{i} \ne 0$ by lemma~\ref{lem:Ln-even-operator}. Thus $a_{2i-1} = 0$ for $1 \le i \le r+1$ and $\rp$ is $z$-even.

  (2) Suppose now that $n = 2r$ is even. If $\rp$ is $z$-even, then all the terms $a_{2i-1}$ vanish. Thus $\mathcal{L}_{n}(\rp) = a_{n}z^{n-1}$ where $a_{n} = a_{2r}$ is a constant. Hence, $\poisson{\mathcal{L}_{n}(\rp)}{z}_{LP} = a_{n}\poisson{z^{n-1}}{z}_{LP} = 0$. Conversely, if $\poisson{\mathcal{L}_{n}(\rp)}{z}_{LP} = 0$, then, we have
  \begin{equation*}
    \poisson{\sum_{i = 1}^{r} \alpha^{(n)}_{i}a_{2i-1}(x,y)z^{2i-2} + a_{2r}z^{n-1}}{z}_{LP} = \sum_{i = 1}^{r} \alpha^{(n)}_{i} \poisson{a_{2i-1}(x,y)}{z}_{LP} z^{2i-2} = 0.
  \end{equation*}
  Hence, we have $\poisson{a_{2i-1}(x,y)}{z}_{LP} = 0$ for $1 \le i \le r$. But since $n$ is even, all the $a_{2i-1}(x,y)$ are odd and by lemma~\ref{lem:vanishing-LP-bracket}, they must vanish. This achieves the proof.
\end{proof}

We are now interested to formulate a necessary and sufficient condition for a homogeneous polynomial $\rp$ of degree $n \ge 1$ to be odd in $z$. A result similar to lemma~\ref{lem:Ln-even-operator} will be established first. To do so, we introduce the differential operator
\begin{equation*}
  \widetilde{\mathcal{D}}_{k} : = z^{k}{\partial_{z}}^{k}, \qquad k = 0,1,2, \dotsc .
\end{equation*}

\begin{lem}\label{lem:Kn-odd-operator}
  Let $n \ge 1$ and set $q = \lfloor \dfrac{n+1}{2}\rfloor$, $r = \lfloor \dfrac{n}{2}\rfloor$, so that $q = r$, if $n$ is even and $q = r+1$, if $n$ is odd. Then, there exists a unique $(r+1)$-tuple $(\mu^{(n)}_{0}, \dotsc, \mu^{(n)}_{r})$ of rational numbers, solution of the equations (in the unknowns $(\mu_{0}, \dotsc, \mu_{r})$)
  \begin{equation*}
    \sum_{k = 0}^{r} B_{2i-1,k} \mu_{k} = 0, \quad \text{for} \quad i = 1, \dotsc , r \quad \text{and} \quad \sum_{k = 0}^{r} B_{n,k}\mu_{k} = 1  .
  \end{equation*}
  Moreover, if we define the linear operator
  \begin{equation}\label{eq:def-Kn}
    \mathcal{K}_{n} : = \sum_{k = 0}^{r} \mu^{(n)}_{k} \widetilde{\mathcal{D}}_{k},
  \end{equation}
  then, for every homogeneous polynomial $\rp$ of degree $\le n$, we have
  \begin{equation*}
    \mathcal{K}_{n}(\rp) = \sum_{i = 0}^{q-1} \beta^{(n)}_{i} a_{2i}(x,y)z^{2i} + a_{n}(x,y)z^{n},
  \end{equation*}
  where
  \begin{equation*}
    \beta^{(n)}_{i} = \sum_{k = 0}^{r} B_{2i,k} \mu^{(n)}_{k}  \ne 0, \qquad 0 \le i \le q-1.
  \end{equation*}
\end{lem}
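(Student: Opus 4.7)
The plan is to follow the same strategy as in the proof of Lemma~\ref{lem:Ln-even-operator}, adapted to the ``odd in $z$'' setting. First, I would reformulate the linear system in the unknowns $(\mu_0, \dotsc, \mu_r)$ as a matrix equation $B^1 X = (0, \dotsc, 0, 1)^t$, where $X := (\mu_0, \dotsc, \mu_r)^t$ and $B^1 := B(I^1, J)$ with
\begin{equation*}
    I^1 := \set{1, 3, 5, \dotsc, 2r-1, n}, \qquad J := \set{0, 1, \dotsc, r}.
\end{equation*}
Written in increasing order, $i_k = 2k-1$ for $k = 1, \dotsc, r$, $i_{r+1} = n$, while $j_k = k-1$, so the pointwise inequality $j_k \le i_k$ holds for every $k$. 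Lemma~\ref{lem:positive-minor-criteria} then yields $\det B^1 > 0$, which gives existence and uniqueness of the rational tuple $(\mu^{(n)}_0, \dotsc, \mu^{(n)}_r)$.

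Next, I would verify the formula for $\mathcal{K}_n(\rp)$ by direct calculation. For $j \ge k$ one has $\widetilde{\mathcal{D}}_k(z^j) = z^k \partial_z^k (z^j) = B_{j,k} z^j$, while $\widetilde{\mathcal{D}}_k(z^j) = 0$ if $j < k$. Writing $\rp = \sum_{j=0}^n a_j(x,y) z^j$ and using linearity, one obtains
\begin{equation*}
    \mathcal{K}_n(\rp) = \sum_{j=0}^{n} \Bigl( \sum_{k=0}^{r} B_{j,k} \mu^{(n)}_k \Bigr) a_j(x,y) z^j.
\end{equation*}
The first batch of linear equations forces the inner sum to vanish whenever $j = 2i-1$ with $1 \le i \le r$, while the last equation normalizes the coefficient of $z^n$ to $1$. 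Grouping the surviving even-index terms $j = 2i$ with $0 \le i \le q-1$ yields the announced expression with $\beta^{(n)}_i = \sum_{k=0}^r B_{2i, k} \mu^{(n)}_k$.

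It then remains to show that $\beta^{(n)}_i \ne 0$ for every $0 \le i \le q-1$, and here I would mirror the cofactor argument from Lemma~\ref{lem:Ln-even-operator}. By Cramer's rule, $\beta^{(n)}_i = \det C^i / \det B^1$, where $C^i$ is obtained from $B^1$ by replacing its last row (the one indexed by $n$) with $(B_{2i, 0}, B_{2i, 1}, \dotsc, B_{2i, r})$. A cyclic permutation of the rows of $C^i$ converts it, up to sign, into $B(K^i, J)$ for the set $K^i$ obtained from $I^1$ by substituting $2i$ for $n$ and reordering. Since $2i$ is even while every other element of $I^1 \setminus \set{n}$ is odd, the new index slots into a unique position in the reordered $K^i$, and a short case distinction (depending on whether $i = 0$ or $i \ge 1$) shows that the pointwise inequality $j_k \le i_k$ continues to hold along $K^i$; Lemma~\ref{lem:positive-minor-criteria} then gives $\det B(K^i, J) > 0$, hence $\beta^{(n)}_i \ne 0$. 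The main obstacle I anticipate is precisely this last verification: keeping track of the sign of the reordering permutation and checking the inequalities $j_k \le i_k$ across all the sub-cases and both parities of $n$. This is routine bookkeeping rather than a conceptual difficulty, but it is the step where an error is most likely to creep in.
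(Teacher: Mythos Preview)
Your proposal is correct and follows essentially the same route as the paper: the paper also rewrites the system as $\tilde{B}^{1}X=(0,\dotsc,0,1)^{t}$ with $\tilde{I}^{1}=\set{1,3,\dotsc,2r-1,n}$ and $\tilde{J}=\set{0,1,\dotsc,r}$, invokes Lemma~\ref{lem:positive-minor-criteria} for invertibility, and then shows $\beta^{(n)}_{i}=\det\tilde{C}^{i}/\det\tilde{B}^{1}\ne 0$ by the same row-substitution and reordering trick. Your account is in fact slightly more explicit than the paper's, which simply points back to Lemma~\ref{lem:Ln-even-operator} and says ``we can show that $\det\tilde{C}^{i}\ne 0$'' without spelling out the verification of $j_{k}\le i_{k}$ along $K^{i}$.
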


Since the proof is almost identical to the one of lemma~\ref{lem:Ln-even-operator}, we will not repeat it but just emphasize the changes. In the proof of lemma~\ref{lem:Ln-even-operator}, the $q \times q$ matrix $B^{1}$ should be replaced by the $(r+1) \times (r+1)$ matrix $\tilde{B}^{1} := B(\tilde{I}^{1}, \tilde{J})$, where
\begin{equation*}
  \tilde{I}^{1} : = \set{1, 3, \dotsc , 2r-1, n}, \quad \tilde{J} : = \set{0,1,2, \dotsc , r}.
\end{equation*}
Note that $\det \tilde{B}^{1} >0$ by lemma~\ref{lem:positive-minor-criteria} and hence that the $\mu^{(n)}_{k}$ are uniquely defined. Then, we introduce the $q \times (r+1)$ matrix $\tilde{B}^{2} := B(\tilde{I}^{2},\tilde{J})$ where
\begin{equation*}
  \tilde{I}^{2} : = \set{0, 2, \dotsc , 2(q-1)}.
\end{equation*}
We get thus
\begin{equation*}
  \beta^{(n)}_{i} = \frac{1}{\det \tilde{B}^{1}} \sum_{k = 0}^{r} (-1)^{k+r+1} {\tilde{B}^{2}}_{i,k} \tilde{\Delta}^{1}_{r+1,k} = \frac{\det \tilde{C}^{i}}{\det \tilde{B}^{1}}, \qquad 0 \le i \le q-1,
\end{equation*}
where $\tilde{\Delta}^{1}_{r+1,k}$ is the $(r+1,k)$ minor of $\tilde{B}^{1}$ and $\tilde{C}^{i}$ is the $(r+1) \times (r+1)$ matrix obtained from $\tilde{B}^{1}$ by substituting its last row by the $i$-th row of $\tilde{B}^{2}$. As in lemma~\ref{lem:Ln-even-operator}, we can show that $\det \tilde{C}^{i} \ne 0$. We get therefore the following result, which proof is similar to that of theorem~\ref{thm:z-even-CNS} and will be omitted

\begin{thm}\label{thm:z-odd-CNS}
  Let $\rp \in \Pn{n}(\RR^{3})$ be a homogeneous polynomial of degree $n \ge 1$, $r = \lfloor \dfrac{n}{2}\rfloor$ and $q = \lfloor \dfrac{n+1}{2}\rfloor$. Let $\mathcal{K}_{n}$ be the differential operator defined by~\eqref{eq:def-Kn} with $\mu^{(n)}_{k} = (\tilde{B}^{1})^{-1}_{1k}$. We have the following results.
  \begin{enumerate}
    \item If $n$ is even, then, $\rp$ is $z$-odd iff $\mathcal{K}_{n}(\rp) = 0$.
    \item If $n$ is odd, then, $\rp$ is $z$-odd iff $\poisson{\mathcal{K}_{n}(\rp)}{z}_{LP} = 0$.
  \end{enumerate}
\end{thm}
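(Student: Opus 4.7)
The plan is to mirror the proof of Theorem~\ref{thm:z-even-CNS}, but swapping the roles of odd and even coefficients and of the operators $\mathcal{L}_n$ and $\mathcal{K}_n$. I would start by expanding
\begin{equation*}
  \rp = a_0(x,y) + a_1(x,y)z + \dotsb + a_n(x,y) z^n,
\end{equation*}
where each $a_i(x,y)$ is homogeneous in $(x,y)$ of degree $n-i$. The statement ``$\rp$ is $z$-odd'' is then equivalent to the vanishing of all even-indexed coefficients $a_{2i}$. Lemma~\ref{lem:Kn-odd-operator} already supplies the identity
\begin{equation*}
  \mathcal{K}_n(\rp) = \sum_{i=0}^{q-1} \beta^{(n)}_i a_{2i}(x,y) z^{2i} + a_n(x,y) z^n, \qquad \beta^{(n)}_i \neq 0,
\end{equation*}
so the task reduces to checking, parity by parity, when the stated condition detects this vanishing. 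A useful preliminary observation is that $a_n$ is a scalar constant, since its $(x,y)$-degree is $n - n = 0$.

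If $n = 2r$ is even, then $q = r$ and the extra term $a_n z^n = a_{2r} z^{2r}$ is itself an even-indexed coefficient. The forward implication is then immediate: if $\rp$ is $z$-odd, every $a_{2i}$ vanishes and $\mathcal{K}_n(\rp) = 0$. For the converse, grouping $\mathcal{K}_n(\rp) = 0$ by powers of $z$ and using linear independence of the monomials $z^{2i}$ (viewed over the polynomials in $(x,y)$), together with $\beta^{(n)}_i \neq 0$, forces $a_{2i} \equiv 0$ for $0 \le i \le r$, hence $\rp$ is $z$-odd.

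If $n = 2r+1$ is odd, then $q = r+1$, the leading term $a_n z^n$ is odd-indexed, and $a_n$ is a scalar constant. I would apply $\poisson{\cdot}{z}_{LP}$ to $\mathcal{K}_n(\rp)$ and use the two rules $\poisson{z^k}{z}_{LP} = 0$ and $\poisson{fz^k}{z}_{LP} = \poisson{f}{z}_{LP} z^k$ already recorded in the paper to obtain
\begin{equation*}
  \poisson{\mathcal{K}_n(\rp)}{z}_{LP} = \sum_{i=0}^{r} \beta^{(n)}_i \poisson{a_{2i}(x,y)}{z}_{LP} \, z^{2i},
\end{equation*}
the $a_n z^n$ contribution dropping out because $a_n$ is constant. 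The forward direction is then immediate. For the converse, linear independence of the $z^{2i}$ together with $\beta^{(n)}_i \neq 0$ yield $\poisson{a_{2i}(x,y)}{z}_{LP} = 0$ for each $0 \le i \le r$. Since $n$ is odd, each $a_{2i}(x,y)$ is homogeneous in $(x,y)$ of odd degree $n - 2i$, so Lemma~\ref{lem:vanishing-LP-bracket} forces $a_{2i} = 0$, and $\rp$ is $z$-odd.

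The main subtlety --- and the only place where the two parities genuinely split --- is the handling of the extra term $a_n z^n$ in the decomposition supplied by Lemma~\ref{lem:Kn-odd-operator}: when $n$ is even this term is already even-indexed and enters the direct equation $\mathcal{K}_n(\rp) = 0$, whereas when $n$ is odd it is odd-indexed and must be annihilated by the Lie--Poisson bracket, which is possible precisely because $a_n$ is then a constant. Matching these parities with the hypothesis of Lemma~\ref{lem:vanishing-LP-bracket} (odd-degree polynomials in $(x,y)$) is the principal bookkeeping point; everything else is a direct transcription of the argument given for Theorem~\ref{thm:z-even-CNS}.
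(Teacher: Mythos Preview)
Your proposal is correct and is exactly the argument the paper has in mind: the paper explicitly omits the proof of Theorem~\ref{thm:z-odd-CNS}, stating only that it ``is similar to that of theorem~\ref{thm:z-even-CNS},'' and your write-up is precisely that mirror argument, with the parity swap and the handling of the top term $a_n z^n$ carried out correctly.
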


\begin{ex}\label{ex:Kn}
  \begin{itemize}
    \item For $n = 1$, we have $r = 0$, $q=1$ and
          \begin{equation*}
            \mathcal{K}_{1}(\rp) = \rp =  a_{0} + a_{1}z.
          \end{equation*}
    \item For $n = 2$, we have $r = q= 1$ and
          \begin{equation*}
            \mathcal{K}_{2}(\rp) = -p + z\partial_{z}\rp = -a_{0} + a_{2}z^{2}.
          \end{equation*}
    \item For $n = 3$, we have $r = 1$, $q=2$ and
          \begin{equation*}
            \mathcal{K}_{3}(\rp) = -\frac{1}{2} p + \frac{1}{2} z\partial_{z}\rp = -\frac{1}{2} a_{0} + \frac{1}{2}a_{2}z^{2}+a_{3}z^{3}.
          \end{equation*}
    \item For $n = 4$, we have $r =q = 2$ and
          \begin{equation*}
            \mathcal{K}_{4}(\rp) = \rp -  z\partial_{z} \rp + \frac{1}{3} z^{2}{\partial_{z}}^{2}\rp = a_{0} - \frac{1}{3} a_{2}z^{2} + a_{4}z^{4}.
          \end{equation*}
    \item For $n = 5$, we have $r =2$, $q = 3$ and
          \begin{equation*}
            \mathcal{K}_{5}(\rp) = \frac{3}{8} \rp - \frac{3}{8} z\partial_{z} \rp + \frac{1}{8} z^{2}{\partial_{z}}^{2}\rp = \frac{3}{8} a_{0} - \frac{1}{8} a_{2}z^{2} + \frac{3}{8} a_{4}z^{4}+ a_{5}z^{5}.
          \end{equation*}
    \item For $n = 6$, we have $r =q = 3$ and
          \begin{equation*}
            \mathcal{K}_{6}(\rp) = -\rp + z\partial_{z} \rp -\frac{2}{5} z^{2}{\partial_{z}}^{2}+\frac{1}{15} z^{3}{\partial_{z}}^{3}\rp = - a_{0} + \frac{1}{5} a_{2}z^{2} - \frac{1}{5}a_{4}z^{4}
            + a_{6}z^{6}.
          \end{equation*}
  \end{itemize}
\end{ex}


\end{document}